\newtheorem{thm}{Theorem}
\newtheorem{cor}{Corollary}
\newcommand{\Rmnum}[1]{\expandafter\@slowromancap\romannumeral #1@}
\begin{document}

\title{Tradeoff between Delay and Physical Layer Security in Wireless Networks}
\author{Yi Zhong, Xiaohu Ge, Tao Han, Qiang Li, Jing Zhang
\thanks{
The authors are with
School of Electronic Information and Communications, Huazhong University
of Science and Technology, Wuhan, P. R. China. (e-mail: yzhong@hust.edu.cn).
This research was supported by the National Natural Science Foundation of China (NSFC)
grant No. 61701183.

Part of this work was submitted to the 2018 IEEE International Conference on Communications (ICC).
}}
\maketitle
\pagestyle{empty}  
\thispagestyle{empty} 
\begin{abstract}
Exchange of crucial and confidential information leads to the unprecedented attention on the security problem in wireless networks.
Though the security has
been studied in a number of works, the joint optimization of the physical layer security
and the end-to-end delay management, which requires
a meticulous cross-layer design, has seldom been evaluated.
In this work, by combining the tools from stochastic geometry and queueing theory, we analyze the tradeoff between the delay and the security performance in large wireless networks. We further propose a simple transmission mechanism which splits a message into two packets and evaluate its effect on the mean delay and the secrecy outage probability.
Our numerical results reveal that the security performance is better for larger path loss exponent when the density of legitimate nodes is large, and it is reverse when the density is small.
Moreover, it is observed that by introducing the simple mechanism of message split, the security performance is greatly improved in the backlogged scenario and slightly improved in the dynamic scenario when the density of legitimate transmitters is large.
In summary, this work provides an understanding and a rule-of-thumb for the practical
design of wireless networks where both the delay and the security are key concerns.
\end{abstract}

\begin{IEEEkeywords}
Delay, physical layer security, Poisson point process, queueing theory, stochastic geometry
\end{IEEEkeywords}
\newpage

\section{Introduction}
\label{sec:introduction}
\subsection{Motivations}
The wireless access becomes ubiquitous for the civilian and military applications \cite{ge20165g}, including the exchange of crucial and confidential information like the banking related data, leading to unprecedented attention to the security problem for the exchange of information between nodes in the wireless networks. Comparing with the traditional security methods based on the cryptographic techniques at the upper layers
of the communication protocols in a wireless network \cite{4785384}, the physical layer security is an information-theoretic based approach that guarantees the security of wireless links by exploiting the randomness of wireless channels, which makes it more difficult for the attackers to decipher the delivered messages. From the point of view of the information theory, the security problem can be modeled by the wiretap channel proposed by Wyner in his pioneering work \cite{6772207}, where a source sends the information to a legitimate receiver at the presence of an eavesdropper.

On the other hand, due to the emergence of new types of latency-critical applications such as the command-and-control of drones, the advanced manufacturing, the vehicular networks \cite{7981531} and the tactile Internet, the delay performance becomes an indispensable metric that should be considered in the wireless transmissions \cite{6824752}. Though the delay in wireless networks has been studied by a number of works \cite{7886285,yi2007commag}, the joint optimization of the physical layer security and the end-to-end delay management has seldom been studied in the literature, which requires a meticulous cross-layer design.
The relationship between the physical layer security and the end-to-end delay is exceedingly complex, which is affected by a number of factors, such as the random time-varying channel, the irregular deployment of nodes, the wireless transmission mechanisms and so on.

\begin{figure}[!ht]
\centering
\includegraphics[width=0.7\textwidth]{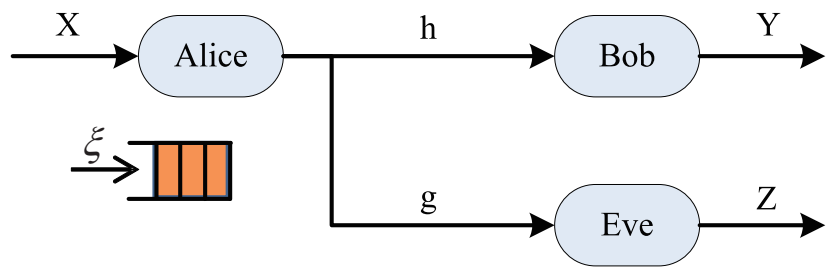}
\caption{Cipher system with random arrival messages: Alice attempts to deliver messages to Bob, while keeping it secret from Eve who intercepts the transmissions.}
\label{fig:Figure1}
\end{figure}

An intuitive tradeoff between the delay and the physical layer security could be observed from Figure \ref{fig:Figure1}, where a transmitter named Alice attempts to deliver messages to a legitimate receiver named Bob, while keeping it secret from an eavesdropper named Eve who intercepts the transmissions.
Alice maintains a buffer of infinite capacity to store the incoming
packets, and the time is divided into discrete slots with equal duration. In each time slot, Alice attempts to
transmit its head-of-line packet with a certain probability $p$ if its buffer is not empty.
A failure occurs if Bob cannot decode the packets sent by Alice. When a failure occurs, Alice retransmits the packet in the next time slot with the same access probability $p$.
The security constraint is that the probability of a packet delivered by Alice failing to achieve perfect secrecy from the detrimental eavesdropper Eve is less than certain threshold.
Increasing the transmit power of Alice reduces the delay but makes it more easier for Eve to decode the message, thus decreasing the security as well.
Moreover, in wireless networks with multiple legitimate links and multiple eavesdroppers, the tradeoff between the delay and the physical layer security is much more perplexed. For example, increasing the transmit probability $p$ may decrease the delay for legitimate links and increase the chance for Eve to intercept the messages since the probability to schedule the packets is increased, but it also increases the interference in wireless networks which makes it harder for Eve to intercept the messages.

The motivation of this work is to evaluate the tradeoff between the delay and the physical layer security in wireless networks.
Particularly, we model the spatial distribution of legitimate nodes and eavesdroppers in wireless networks by using the tools from the point process theory.
Through combining the point process theory and the queueing theory, we analyze the tradeoff between the delay and the physical layer security theoretically.
Our work will provide a useful guideline for the design of secure wireless transmissions while meeting the end-to-end delay requirements imposed by the mobile applications in wireless networks.

\subsection{Related Works}
The work in \cite{1055892} extends the Wyner's wiretap channel model \cite{6772207} to the case of broadcast channels with confidential messages, which shows that perfect secrecy could be achieved if the legitimate link has a better channel than the link of eavesdropper.
The work in \cite{1055917} demonstrates that the secrecy capacity, i.e., the maximum achievable rate of the legitimate link while guaranteeing that the eavesdropper cannot decode the message, is the difference
of the maximum achievable rates between the legitimate link and the eavesdropping link.

Most of the studies on physical layer security in the literature focused on the scenes of a small number of nodes.
For instance, the works in \cite{Shafiee2007Achievable,5485016} studied the security problem in the case of a small number of nodes with multiple antennas, and the works in \cite{4608977, 4455769, 4529293} studied the security problem in the cooperative transmission and in the multiple access channels.
The works in \cite{7470273,7926385} studied the secure and reliable transmission for the multiple-input multiple-output systems.
Recently, due to the applications of stochastic geometry tools in the wireless networks \cite{haenggi2009stochastic}, a few studies have been carried out to analyze the security problem in large wireless networks. For example, the works in
\cite{4595044,5759739,5649155,5648780} studied the
connectivity problem in the wireless network where the legitimate nodes and the eavesdroppers are randomly distributed in space, and the works in \cite{5757463,6142080,6034725} evaluated the coverage and the capacity with the limitation of security.
The work in \cite{6512533} studied the information-theoretic secrecy performance in large cellular networks by modeling the locations of both base stations and mobile users as two independent Poisson point processes (PPP) \cite{haenggi2012stochastic,baccelli2009stochastic2,andrews2010tractable,6781590,8070348}.
The work in \cite{5934342} studies the throughput of large decentralized wireless networks with physical layer security constraints, and the authors also utilize the tools from stochastic geometry to model the wireless networks.

Though the relationship between the security and the capacity of large wireless networks has been well explored in the literature, few works have been done to evaluate the relationship between the security and the delay in large wireless networks. Part of the reason is that the analysis of delay in large wireless network is
far more difficult than the analysis of coverage and capacity since the delay is a complicated function of all links and is
affected by an unusually large number of variables such as the network load, the medium access protocol, the path loss, and so on.
In \cite{7467553}, the joint optimization of the physical layer security and the end-to-end delay is studied in the wireless body area networks.
In \cite{7012108}, the resource allocation problem is studied in the cognitive radio network with delay
and security constraints.
However, these works considers few nodes in the network and have not obtained the analytical results for the large wireless networks.
Recently, several tentative works appear to analyze the delay in large scale wireless networks, such as \cite{blaszczyszyn2015performance}.
The basic idea is to combine the stochastic geometry and the queueing theory, which results in the challenge of interacting queues problem \cite{rao1988stability,luo1999stability,ephremides1987delay}, i.e., the service rates of the queues in the networks rely on the status of all queues in the networks.
Our previous works have derived the conditions for the stability of queues \cite{7486114} and the bounds for the delay \cite{7886285} in large wireless networks.

\subsection{Contributions}
In this work, we evaluate the tradeoff between the delay and the security performance in large wireless networks for both the backlogged scenario and the dynamic scenario.
To our best knowledge, this is the first work to explore the joint optimization of the delay and the security problem in large wireless networks.
By combining the tools from stochastic geometry and queueing theory, we introduce some new analytical approaches and derive the close-formed results for the mean delay and the secrecy outage probability (i.e., the probability of a transmission failing to achieve perfect secrecy). We also propose a simple transmission mechanism which splits a message into two packets and analyze its effect on the delay and security performance.

Our results reveal that the security performance of a large wireless network is better for larger path loss exponent when the density of legitimate nodes is large, and it is reverse when the density is small.
We also find that under the condition that a certain confidential rate (i.e., the achievable rate of confidential messages with the constraint of perfect secrecy) is guaranteed, the delay performance is better in high signal-to-interference ratio (SIR) regime than that in low SIR regime.
By introducing the simple mechanism of message split, it is shown that the delay performance in the backlogged scenario is improved when the density of legitimate transmitters is large.
Meanwhile, introducing message split greatly improves the
security performance in the backlogged scenario, while slightly improves the security performance in the dynamic scenario for small density of legitimate nodes.

To sum up, our contributions in this work could be summarized as follows.
\begin{itemize}
\item The joint optimization of the delay and the
security problem in large wireless networks, which is seldom explored in the literature, is evaluated through numerical analysis.
\item New analytical approaches based on the combination of stochastic geometry and queueing theory are proposed to derive the mean delay and the secrecy outage probability in the scenario of random traffic.
\item Numerical evaluation based on the theoretical analysis is conducted to gain insight for the  practical design of wireless networks where both the delay and the security are key concerns.
\end{itemize}

The remaining part of the paper is organized as follows.
Section \ref{sec:system} describes the propagation model, the traffic model, and the secrecy transmission rate. Section \ref{sec:delay} analyzes the delay in two different scenarios respectively, i.e., the backlogged scenario and the dynamic scenario.
Section \ref{sec:perform} proposes and analyzes a simple transmission mechanism, in which a confidential message is divided into two packets to be delivered independently.
Section \ref{sec:numerical} numerically analyzes the tradeoff between the security and the delay performance.
Finally, Section \ref{sec:conclusions} concludes the paper.

\section{System Model}
\label{sec:system}
\begin{figure}
\centering
\includegraphics[width=0.65\textwidth]{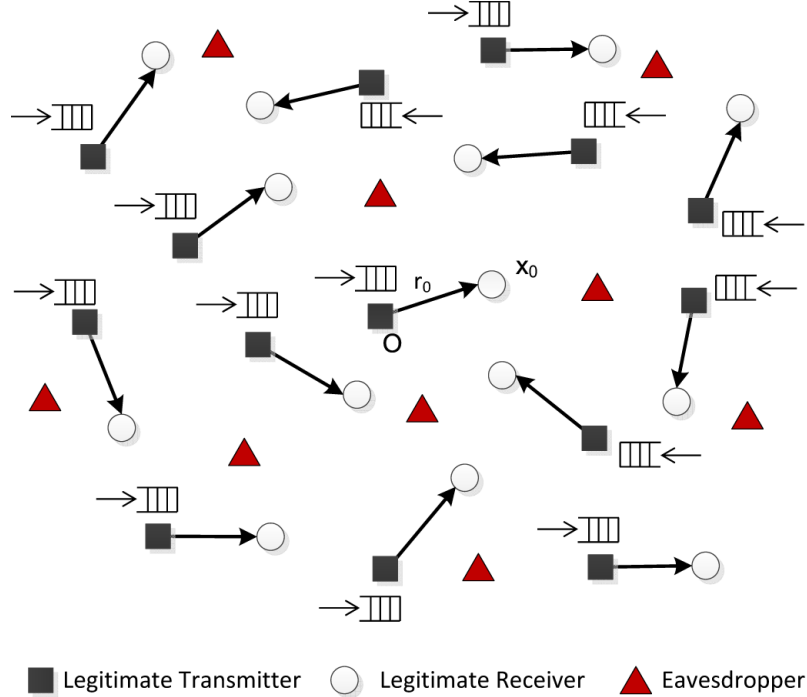}
\caption{Topology of the network, where the squares denotes the legitimate transmitters, the circles denote the legitimate receivers and the triangles denote the eavesdroppers.}
\label{fig:bipolar}
\end{figure}

We consider an ad hoc network consisting of both legitimate nodes and eavesdroppers in a large two-dimensional space.
The locations of the legitimate transmitters are modeled as a homogeneous
PPP $\Phi_l$ with intensity $\lambda_l$, and the locations of the eavesdroppers
are modeled as another independent homogeneous PPP $\Phi_e$ with intensity $\lambda_e$.
Each legitimate transmitter is paired with a legitimate receiver
at a fixed distance $r_0$ and a random orientation.
Without loss of generality, we consider a typical legitimate transmitter located at the origin, and the corresponding legitimate receiver is located at $x_0$ with $|x_0|=r_0$ (see Fig. \ref{fig:bipolar}).
We consider the discrete time system where the time is divided into discrete slots with equal duration, and each transmission attempt occupies exactly one time
slot.
In practice, the locations of the transmitters and the receivers are generated once at the beginning and then kept unchanged during all time slots.
Therefore, the topology of the network could be considered as static.

\subsection{Propagation Model}
The propagation model in this work consists of two parts, i.e., the path loss and the fading.
We assume that the standard power path loss model is used, and the average received power at a receiver $y$ from a transmitter $x$ is denoted by
\begin{equation}
P_r=P_t|x-y|^{-\alpha},
\end{equation}
where $P_t$ is the transmit power, $P_r$ is the received power, and $\alpha$ is the path loss exponent ($\alpha\geq2$).
The fading of each link is assumed to be Rayleigh block fading with zero mean and unit variance, and it is independent between different links. The
power fading coefficients in different time slots are independent identically distributed (i.i.d.) and are constant during one time slot.
Note that the channel model in our work is not as comprehensive as that in \cite{7913700} which considers the path loss, the Rayleigh fading, the azimuth angle of departure and the log-normal shadowing.
However, the channel model in our work is a common assumption that has been widely accepted in the literatures, such as \cite{net:Haenggi13tit}. Meanwhile, using the model of the path loss and the Rayleigh block fading may greatly reduce the complexity of the derivations and make the analysis tractable, and close-formed results could be obtain to gain intuitive insight.
Due to the fact that most of the existing wireless networks are interference-limited, i.e., the noise is much smaller compared with the interference, we ignored the thermal noise in our analysis.
Without loss of generality, the transmit power of all legitimate transmitters is normalized.

When all transmitters are active, the SIR at the typical legitimate receiver $x_0$ is
\begin{equation}
\mathrm{SIR}_{x_0}=\frac{h_{x_0}r_0^{-\alpha}}{\sum_{y\in\Phi_l}h_y|y-x_0|^{-\alpha}},
\end{equation}
where $h_{x_0}$ is the fading coefficient between the typical legitimate transmitter at the origin and the corresponding receiver at $x_0$, $\{h_y\}$ is the fading coefficients between the transmitter $y\in\Phi_l$ and the typical legitimate receiver $x_0$.

When all transmitters are active, the SIR at an eavesdropper $x_e\in\Phi_e$ is
\begin{equation}
\mathrm{SIR}_{x_e}=\frac{h_e|x_e|^{-\alpha}}{\sum_{y\in\Phi_l}h_y|y-x_e|^{-\alpha}},
\end{equation}
where $h_e$ is the fading coefficient between the legitimate transmitter at the origin and the eavesdropper at $x_e$.

\subsection{Traffic Model}
For completeness consideration, we consider two scenarios for the traffic arrival process, i.e., \emph{the backlogged scenario} and \emph{the dynamic scenario} (see Figure \ref{fig:twomodes}).

\begin{figure}
\centering
\includegraphics[width=0.65\textwidth]{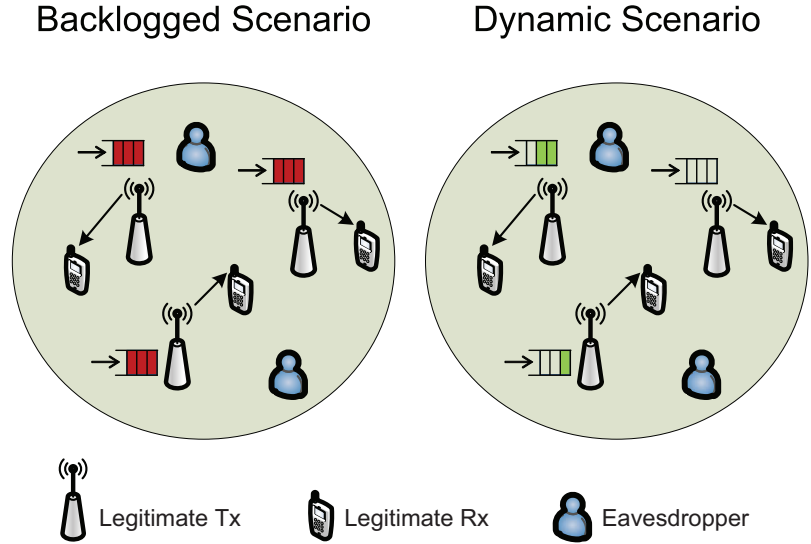}
\caption{Two scenarios for the traffic arrival process, i.e., the backlogged scenario and the dynamic scenario.}
\label{fig:twomodes}
\end{figure}

\subsubsection{Backlogged scenario}
In the backlogged scenario, we assume that the network is backlogged, i.e., the transmitters always have packets to transmit when they are scheduled for transmission. In this way, a meaningful and practically relevant metric in the backlogged scenario is the duration to successfully deliver one packet, which is closely related to the number of retransmissions of a packet. This type
of delay, which ignores the queueing delay (i.e., the waiting time of a packet until it is served), is also defined as local delay \cite{5462132,net:Haenggi13tit,zhong2014managing}.
Without loss of generality, we call it the delay for simplicity in the backlogged scenario.

\subsubsection{Dynamic scenario}
In the dynamic scenario, we assume that the packets arrive at the legitimate transmitters as a stochastic process.
The packets arrival process at each legitimate transmitter $x_i\in\Phi_l$ is assumed to be an independent Bernoulli process of arrival rate $\xi$ per time slot. By the definition of the Bernoulli process, $\xi$ is the probability of an arrival of a packet at user $x_i$ in any time slot.

We assume that each legitimate transmitter maintains a buffer with infinite capacity to store the incoming packets.
In each time slot, each legitimate transmitter attempts to transmit its head-of-line packet with probability $p$ if the queue at the transmitter is not empty.
As for the retransmission mechanism, we assume that if a packet fails for transmission in the scheduled time slot,
the packet will be added into the head of the queue of the corresponding legitimate transmitter and wait to be rescheduled again.

\subsection{Secrecy Transmission Rate}
According to Wyner¡¯s secure encoding scheme \cite{6772207,4802331}, a transmitter may choose
two rate thresholds, i.e., the rate threshold of codewords $R_t$
and the rate threshold of confidential messages $R_s$.
The difference between the two rate thresholds, denoted by $R_e=R_t-R_s$, is the rate cost of securing the
messages against eavesdropper.
If the rate of a legitimate link is less than the threshold $R_t$, the legitimate receiver cannot decode the message with an arbitrarily small error.
Meanwhile, if the rate of every eavesdropping link is less than the threshold $R_e$, perfect secrecy could be  achieved to guarantee the confidential rate $R_s$.

With the above two rate thresholds, we define two failure events for a transmission attempt, i.e., the connection failure and the secrecy failure (see Figure \ref{fig:twofailures}).

\begin{figure}
\centering
\includegraphics[width=0.65\textwidth]{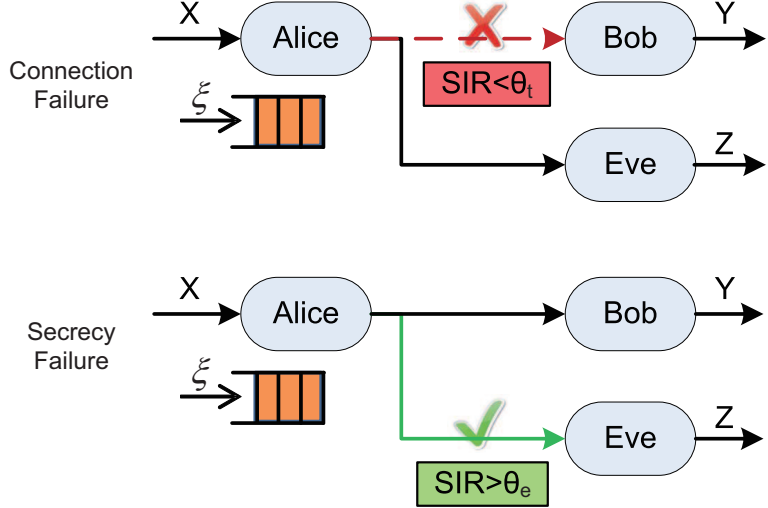}
\caption{Two failure events for a transmission attempt, i.e., the connection failure and the secrecy failure.}
\label{fig:twofailures}
\end{figure}

\subsubsection{Connection Failure}
The connection failure means that a transmission attempt of a packet sent by a legitimate transmitter cannot be successfully decoded by the corresponding receiver. The probability of connection failure for a transmission attempt is
\begin{eqnarray}
\mathcal{P}_{\rm cf}=\mathbb{P}\{\log_2(1+\mathrm{SIR}_{x_0})<R_t\}=\mathbb{P}\{\mathrm{SIR}_{x_0}<2^{R_t}-1\}. \end{eqnarray}
Let $\theta_t=2^{R_t}-1$ be the SIR threshold for the connection failure event.
Due to the retransmission mechanism, if a transmission attempt of a packet fails, the packet will be added into the head of the queue and wait to be rescheduled for transmission.

\subsubsection{Secrecy Failure}
The secrecy failure means that a transmission attempt of a packet sent by a legitimate transmitter is not perfectly secure against the eavesdroppers.
The probability of secrecy failure for a transmission attempt is
\begin{eqnarray}
\mathcal{P}_{\rm sf}=1-\prod_{x_e\in\Phi_e}\mathbb{P}\{\log_2(1+\mathrm{SIR}_{x_e})<R_e\}=1-\prod_{x_e\in\Phi_e}\mathbb{P}\{\mathrm{SIR}_{x_e}<2^{R_e}-1\}. \end{eqnarray}
Let $\theta_e=2^{R_e}-1$ be the SIR threshold for the secrecy failure event.
Since a packet may be retransmitted for several times due to the connection failure, the delivery of a packet is perfectly secure only when all the transmission attempts are successful.

Therefore, the delay in the backlogged scenario is the number of time slots required by a legitimate transmitter to successfully deliver a packet to the corresponding legitimate receiver. On the other hand, the delay in the dynamic scenario is the number of time slots between the arrival of a packet and the successful delivery of that packet. Meanwhile, the secrecy performance can be characterized by the secrecy outage probability, which in this paper is defined as the probability that at least one transmission attempt of a packet is not perfectly secure.
Let $N_r$ be the total number of scheduled transmissions until a packet is successfully decoded at the legitimate receiver, and $N_r$ is a geometric distributed random variable with success probability $1-\mathcal{P}_{\rm cf}$, i.e., $N_r\sim \mathrm{Geo}(1-\mathcal{P}_{\rm cf})$. Then, the secrecy outage probability with given $N_r$ is
\begin{eqnarray}
\mathcal{P}_{\rm so}=1-(1-\mathcal{P}_{\rm sf})^{N_r}. \label{eqn:soprob}
\end{eqnarray}

The relationship between the mean delay and the secrecy outage probability is not straightforward.
Intuitively, the equation (\ref{eqn:soprob}) reveals that increasing the total number of transmissions $N_r$ raises the mean delay as well as the secrecy outage probability.
However, since the probability of secrecy failure $\mathcal{P}_{\rm sf}$ may decrease, the secrecy outage probability may also decrease.
Therefore, the theoretical analysis of the mean delay and the secrecy outage probability is necessary.
In the following discussions, we derive the expressions to reveal the exact relationship between the mean delay and the secrecy outage probability.

The system model described above is a general mathematical framework that could be applied to analyze the security and delay performance of general large wireless networks, corresponding to a number of application scenarios. For example, the proposed mathematical framework could be applied in the military communications, where the soldiers from one army want to eavesdrop the confidential messages from the hostile army. In this application scenario, the soldiers from the eavesdropping army could be considered as eavesdroppers and the soldiers from the other army could be considered as the legitimate nodes. Another application scenario could be the prevention of theft in the wireless banking transactions, where the thieves could be considered as the eavesdroppers and the banking users could be considered as the legitimate nodes.
In these application scenarios, all eavesdroppers try to eavesdrop the confidential messages from all legitimate users. Thus, the legitimate users should prevent their confidential messages from being eavesdropped by any one of the eavesdroppers.

\section{Delay and Secrecy Outage Probability}
\label{sec:delay}
The delay in the backlogged scenario is exactly the transmission delay caused by retransmission (i.e., the time to successfully transmit a packet that is served), while the delay in the dynamic scenario consists of two parts, i.e., the queueing delay and the transmission delay.
We analyze the delay in the two different scenarios respectively in the following discussions.

\subsection{Backlogged Scenario}
In the backlogged scenario, whenever a transmitter is scheduled to access the channel it always has packet to transmit.
The backlogged assumption is widely used in the analysis of  wireless communication systems. With the backlogged assumption, the obtained interference in the network is an upper bound for the interference without backlogged assumption. Therefore, the performance derived with the backlogged assumption will be a useful bound for the performance of the practical system. With the backlogged assumption, considering the queueing delay is meaningless since the queueing delay is always infinite. However, a meaningful and practically relevant metric in the backlogged scenario is the transmission delay. The transmission delay in the backlogged scenario is exactly the local delay, which has been well explored in the literature, such as \cite{net:Haenggi13tit} and \cite{zhong2014managing}.
Another benefit of introducing the backlogged assumption is that the theoretical analysis of delay becomes tractable since the interacting between different queues is avoided. Concise and close-formed results could be obtained by the backlogged assumption, which further leads to more insights.
From \cite{net:Haenggi13tit}, we get the following lemma, which gives the analytical expression for the mean delay of the typical legitimate link.
\begin{thm}
\label{thm:1}
The mean delay of the typical legitimate link in the backlogged scenario is
\begin{equation}
\mathcal{D} = \frac{1}{p}\exp\left({\lambda_l \pi r_0^2\theta_t^{\delta}C(\delta)p}{(1-p)^{\delta-1}}\right), \label{eqn:localdelay}
\end{equation}
where $\theta_t=2^{R_t}-1$, $\delta=2/\alpha$ and $C(\delta)=\Gamma(1+\delta)\Gamma(1-\delta)=1/{\mathrm{sinc}(\delta)}$.
\end{thm}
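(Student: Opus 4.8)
The plan is to exploit the \emph{quenched} nature of the network: since the node locations in $\Phi_l$ are fixed once and for all while the fading coefficients and the ALOHA transmit indicators are renewed i.i.d.\ across slots, the number of slots needed to deliver the head-of-line packet is geometric \emph{conditioned on} $\Phi_l$. First I would condition on a realization of $\Phi_l$ and denote by $P_{\rm s}(\Phi_l)$ the per-slot success probability, i.e.\ the probability (over fading and the interferers' transmit indicators) that the typical receiver decodes given that the typical transmitter accesses the channel. In any slot the tagged link makes a successful transmission with probability $p\,P_{\rm s}(\Phi_l)$, so the conditional mean number of slots is $1/(p\,P_{\rm s}(\Phi_l))$, and the local delay is obtained by spatial averaging,
\begin{equation}
\mathcal{D}=\frac{1}{p}\,\mathbb{E}_{\Phi_l}\!\left[\frac{1}{P_{\rm s}(\Phi_l)}\right].
\end{equation}
The essential point — and the step most prone to error — is that the reciprocal sits \emph{inside} the spatial expectation; by Jensen's inequality $\mathbb{E}[1/P_{\rm s}]\neq 1/\mathbb{E}[P_{\rm s}]$, so using the annealed success probability would give the wrong (and smaller) answer.

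Next I would compute $P_{\rm s}(\Phi_l)$ in closed product form. Writing the connection condition as $h_{x_0}>\theta_t r_0^{\alpha}\sum_{y}t_y h_y|y-x_0|^{-\alpha}$, where $t_y\in\{0,1\}$ is the $\mathrm{Bernoulli}(p)$ activity indicator of interferer $y$, and using that $h_{x_0}$ is unit-mean exponential, the tail probability becomes $\mathbb{E}[\exp(-\theta_t r_0^\alpha\sum_y t_y h_y|y-x_0|^{-\alpha})]$. Averaging the independent $\exp(1)$ interferer fades and then the indicators $t_y$ factorizes over the points, giving
\begin{equation}
P_{\rm s}(\Phi_l)=\prod_{y\in\Phi_l}\left[(1-p)+\frac{p}{1+\theta_t r_0^{\alpha}|y-x_0|^{-\alpha}}\right]=\prod_{y\in\Phi_l}\frac{1+(1-p)s_y}{1+s_y},
\end{equation}
with $s_y=\theta_t r_0^{\alpha}|y-x_0|^{-\alpha}$. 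The reciprocal is then a product of $(1+s_y)/(1+(1-p)s_y)$ over the points of a PPP.

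I would finish by applying the probability generating functional of the homogeneous PPP, $\mathbb{E}_{\Phi_l}[\prod_{y}f(y)]=\exp(-\lambda_l\int_{\mathbb{R}^2}(1-f(y))\,dy)$, with $f(y)=(1+s_y)/(1+(1-p)s_y)$. A short computation gives $1-f(y)=-p s_y/(1+(1-p)s_y)$, so the exponent is $+\lambda_l\int_{\mathbb{R}^2}p s_y/(1+(1-p)s_y)\,dy$. Passing to polar coordinates about $x_0$ and substituting $u=r^2$ reduces this to the standard integral $\int_0^\infty (u^{1/\delta}+c)^{-1}\,du=c^{\delta-1}\,\pi\delta/\sin(\pi\delta)$ with $c=(1-p)\theta_t r_0^\alpha$, valid for $\delta=2/\alpha<1$, i.e.\ $\alpha>2$. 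Recognizing $\pi\delta/\sin(\pi\delta)=\Gamma(1+\delta)\Gamma(1-\delta)=C(\delta)$ and using $r_0^{\alpha\delta}=r_0^2$ collapses the exponent to $\lambda_l\pi r_0^2\theta_t^{\delta}C(\delta)\,p\,(1-p)^{\delta-1}$, which together with the $1/p$ prefactor yields the claimed expression.

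The main obstacle I anticipate is less the algebra than the convergence and regularity of the spatial integral. The near-field ($r\to0$) contribution to $\int(1-f)\,dy$ is tamed only by the factor $(1-p)$ in the denominator, so the derivation implicitly requires $p<1$; as $p\to1$ the exponent diverges through $(1-p)^{\delta-1}$, which is precisely the well-known phase transition of the local delay. The far-field contribution in turn needs $\alpha>2$ for $\int r^{1-\alpha}\,dr$ to converge. I would therefore make sure both conditions are stated explicitly so that the PGFL step and the tabulated integral both apply and the interchange of product and expectation is justified.
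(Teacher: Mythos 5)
Your derivation is correct and is exactly the standard local-delay argument that the paper invokes by reference (its ``proof'' is a one-line citation of \cite[Lemma 2]{net:Haenggi13tit} and \cite[Theorem 3]{zhong2014managing}): condition on $\Phi_l$ so the delay is geometric, write the conditional success probability as a product over interferers, and apply the PGFL to its reciprocal, keeping the quenched average $\mathbb{E}_{\Phi_l}[1/P_{\rm s}(\Phi_l)]$ rather than $1/\mathbb{E}_{\Phi_l}[P_{\rm s}(\Phi_l)]$, which is indeed the crucial point. Your write-up supplies precisely the computation the paper omits, including the correct convergence caveats $p<1$ and $\alpha>2$.
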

\begin{proof}
The proof is ignored in this paper for conciseness since it is similar to the proof in \cite[Lemma 2]{net:Haenggi13tit} and \cite[Theorem 3]{zhong2014managing}.
\end{proof}

The result in Theorem \ref{thm:1} is closed-form and can be easily evaluated, and it shows directly how the mean delay varies with the transmit probability $p$.
The equation (\ref{eqn:localdelay}) indicates that the mean delay goes to infinite both when $p$ decreases to zero and when $p$ increases to one since $\delta<1$ holds.
When the transmit probability $p$ decreases to zero, a packet is scheduled for transmission for very small probability, and when the transmit probability $p$ increases to one, the interference and the interference correlation may greatly deteriorate the delay performance.

In the following, we derive the secrecy outage probability for the backlogged scenario.
Note that $N_r$ is the total number of transmissions until a packet is successfully decoded at the legitimate receiver.
Then, Theorem \ref{thm:1} gives the analytical expression for the mean times of transmissions $\mathbb{E}\{N_r\}=\mathcal{D}$ in the backlogged scenario.
For the typical legitimate link, when the realization of $\Phi_l$ is given, the connection failure probability is determined by the fading and random access mechanism, which are independent among different time slots. Thus, the connection failure probability with given $\Phi_l$, denoted by $\mathcal{P}_{\rm cf}^{\Phi_l}$, is the same for all time slots.
Let $N_r^{\Phi_l}$ be the total number of transmissions until a packet is successfully decoded at the legitimate
receiver when $\Phi_l$ is given.
Then, $N_r^{\Phi_l}$ is a geometric distributed random variable with success probability $\mathcal{P}_{\rm cf}^{\Phi_l}$, i.e., $N_r^{\Phi_l}\sim \mathrm{Geo}(1-\mathcal{P}_{\rm cf}^{\Phi_l})$.
In order to guarantee the perfect secrecy, each of the $N_r^{\Phi_l}$ transmission attempts should be perfect secure; otherwise, the message will be eavesdropped.
Letting $\mathcal{P}_{\rm sf}^{\Phi_l,\Phi_e}$ be the secrecy failure probability with given $\Phi_l$ and $\Phi_e$, the secrecy outage probability is
\begin{eqnarray}
\mathcal{P}_{\rm so}&=&1-\mathbb{E}_{\Phi_l,\Phi_e}\big[(1-\mathcal{P}_{\rm sf}^{\Phi_l,\Phi_e})^{N_r^{\Phi_l}}\big]\nonumber\\
&=& 1-\mathbb{E}_{\Phi_l,\Phi_e}\Big[\sum_{n=1}^\infty(1-\mathcal{P}_{\rm sf}^{\Phi_l,\Phi_e})^n\mathbb{P}\{N_r^{\Phi_l}=n\}\Big] \nonumber\\
&\stackrel{(a)}{=}&1-\mathbb{E}_{\Phi_l,\Phi_e}\Big[\sum_{n=1}^\infty(1-\mathcal{P}_{\rm sf}^{\Phi_l,\Phi_e})^n(1-\mathcal{P}_{\rm cf}^{\Phi_l})(\mathcal{P}_{\rm cf}^{\Phi_l})^{n-1}\Big] \nonumber\\
&=&1-\mathbb{E}_{\Phi_l,\Phi_e}\Big[\frac{1-\mathcal{P}_{\rm cf}^{\Phi_l}}{\mathcal{P}_{\rm cf}^{\Phi_l}}\sum_{n=1}^\infty\Big(\big(1-\mathcal{P}_{\rm sf}^{\Phi_l,\Phi_e}\big)\mathcal{P}_{\rm cf}^{\Phi_l}\Big)^n\Big] \nonumber\\
&=&\mathbb{E}_{\Phi_l,\Phi_e}\bigg[\frac{\mathcal{P}_{\rm sf}^{\Phi_l,\Phi_e}}{1-\mathcal{P}_{\rm cf}^{\Phi_l}+\mathcal{P}_{\rm sf}^{\Phi_l,\Phi_e}\mathcal{P}_{\rm cf}^{\Phi_l}} \bigg], \label{eqn:sop}
\end{eqnarray}
where $(a)$ holds since $N_r^{\Phi_l}$ is a geometric distributed random variable with success probability $1-\mathcal{P}_{\rm cf}$.
The last equation holds due to the sum formula of infinite geometric series whose successive terms have a common ratio.
With the above discussion, we could further derive the secrecy outage probability in the following theorem.
\begin{thm}
\label{thm:2}
The secrecy outage probability of the typical link in the backlogged scenario can be approximated by
\begin{equation}
\mathcal{P}_{\rm so} \simeq \frac{\exp\big(\frac{\lambda_e}{p\lambda_l \theta_e^\delta C(\delta)}\big)-1}{\exp\big(\frac{\lambda_e}{p\lambda_l \theta_e^\delta C(\delta)}\big)+\exp\big(-
p\lambda_l \pi r_0^2\theta_t^\delta C(\delta)\big)-1}. \label{eqn:securityoutage}
\end{equation}
\end{thm}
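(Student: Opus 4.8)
The plan is to start from the exact identity \eqref{eqn:sop} and reduce the expectation of a ratio of $\Phi_l$- and $\Phi_e$-dependent failure probabilities to a ratio of their spatial averages. Writing the integrand of \eqref{eqn:sop} as $f(\mathcal{P}_{\rm sf}^{\Phi_l,\Phi_e},\mathcal{P}_{\rm cf}^{\Phi_l})$ with $f(a,b)=a/(1-b+ab)$, the first and central step is a mean-field decoupling: I would replace the two conditional failure probabilities by their unconditional (spatially averaged) counterparts $\mathcal{P}_{\rm cf}$ and $\mathcal{P}_{\rm sf}$, so that
\[
\mathcal{P}_{\rm so}\simeq \frac{\mathcal{P}_{\rm sf}}{1-\mathcal{P}_{\rm cf}+\mathcal{P}_{\rm sf}\,\mathcal{P}_{\rm cf}}.
\]
This is precisely the source of the $\simeq$ in the statement; everything afterward is an exact evaluation of the two averages.

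Next I would evaluate $1-\mathcal{P}_{\rm cf}$, the spatially averaged success probability of the typical legitimate link. Since $h_{x_0}$ is unit-mean exponential, $\mathbb{P}\{\mathrm{SIR}_{x_0}>\theta_t\}=\mathbb{E}[\exp(-\theta_t r_0^\alpha I)]$ is the Laplace transform of the ALOHA-thinned interference $I=\sum_{y\in\Phi_l}t_y h_y|y-x_0|^{-\alpha}$. Applying the Laplace functional of the PPP $\Phi_l$, averaging each term over the access indicator $t_y$ (active with probability $p$) and the Rayleigh fading, and evaluating the radial integral $\int_{\mathbb{R}^2} p\,\theta_t r_0^\alpha|y|^{-\alpha}/(1+\theta_t r_0^\alpha|y|^{-\alpha})\,\mathrm{d}y=p\pi r_0^2\theta_t^\delta C(\delta)$ gives
\[
1-\mathcal{P}_{\rm cf}=\exp\!\big(-p\lambda_l\pi r_0^2\theta_t^\delta C(\delta)\big),
\]
which supplies the $\exp(-p\lambda_l\pi r_0^2\theta_t^\delta C(\delta))$ factor in the denominator of \eqref{eqn:securityoutage}.

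For $\mathcal{P}_{\rm sf}$ I would proceed in two layers. First, for a single eavesdropper at $x_e$ the same Laplace-functional computation, now with link distance $|x_e|$ in place of $r_0$ and threshold $\theta_e$, yields the fading- and $\Phi_l$-averaged detection probability $\bar q(x_e)=\exp(-p\lambda_l\pi|x_e|^2\theta_e^\delta C(\delta))$. Second, treating the per-eavesdropper detection events as an independent marking of $\Phi_e$ and applying the probability generating functional of $\Phi_e$, the averaged secrecy-success probability becomes $1-\mathcal{P}_{\rm sf}=\exp(-\lambda_e\int_{\mathbb{R}^2}\bar q(x_e)\,\mathrm{d}x_e)$. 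The Gaussian-type integral $\int_{\mathbb{R}^2}\exp(-p\lambda_l\pi\theta_e^\delta C(\delta)|x_e|^2)\,\mathrm{d}x_e=1/(p\lambda_l\theta_e^\delta C(\delta))$ then gives
\[
1-\mathcal{P}_{\rm sf}=\exp\!\Big(-\tfrac{\lambda_e}{p\lambda_l\theta_e^\delta C(\delta)}\Big).
\]

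Finally I would substitute $\mathcal{P}_{\rm cf}=1-e^{-B}$ and $\mathcal{P}_{\rm sf}=1-e^{-A}$, with $A=\lambda_e/(p\lambda_l\theta_e^\delta C(\delta))$ and $B=p\lambda_l\pi r_0^2\theta_t^\delta C(\delta)$, into the decoupled ratio; the denominator collapses to $1-e^{-A}+e^{-A-B}$, and multiplying numerator and denominator by $e^{A}$ reproduces \eqref{eqn:securityoutage} exactly. I expect the main obstacle to be not the integrals, which are routine, but the justification of the two approximations bundled into the $\simeq$: namely, swapping $\mathbb{E}[f(a,b)]$ for $f(\mathbb{E}a,\mathbb{E}b)$ despite $f$ being nonlinear and $a,b$ correlated through the shared field $\Phi_l$, and ignoring the spatial correlation of the interference seen at the legitimate receiver and at the various eavesdroppers when factorizing through the generating functional. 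Controlling the error of these mean-field steps, rather than carrying out the stochastic-geometry calculus, is where the real work lies, and I would at least sanity-check the result against the limiting regimes $\lambda_e\to 0$ and $p\to 0,1$.
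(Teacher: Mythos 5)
Your proposal follows essentially the same route as the paper's own proof: starting from the exact expression \eqref{eqn:sop}, applying the mean-field (Jensen-type) approximation $\mathbb{E}[f(a,b)]\simeq f(\mathbb{E}a,\mathbb{E}b)$, evaluating $1-\mathcal{P}_{\rm cf}$ and $1-\mathcal{P}_{\rm sf}$ via the PGFL of the two PPPs (with the same two-layer computation over $\Phi_l$ and then $\Phi_e$ for the eavesdroppers), and finishing with the same algebra. Your remarks on the unjustified decoupling steps are apt --- the paper performs both of them, even presenting the factorization across eavesdroppers as an exact equality --- but they do not change the fact that the derivation itself coincides with the paper's.
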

\begin{proof}
See Appendix \ref{appendix:a}.
\end{proof}

The secrecy outage probability given by Theorem \ref{thm:2} is closed-form, and the relationship between the secrecy outage probability and the system parameters can be observed directly from the equation (\ref{eqn:securityoutage}). The equation (\ref{eqn:securityoutage}) also verifies the intuition that the secrecy outage probability is positively correlated with the intensity of the eavesdroppers $\lambda_e$ and is negative correlated with the intensity of the legitimate transmitters $\lambda_l$. Specially, we get $\mathcal{P}_{\rm so}\rightarrow0$ as $\lambda_e\rightarrow0$, and $\mathcal{P}_{\rm so}\rightarrow1$ as $\lambda_e\rightarrow+\infty$. The reason is that increasing the intensity of eavesdroppers may increase the probability that an eavesdropper appears nearby the legitimate links. Meanwhile, increasing the intensity of legitimate transmitters may increase the interference in the wireless network, thus preventing the message from being eavesdropped.
Note that when either $p\rightarrow0$ or $\lambda_l\rightarrow0$ holds, the secrecy outage probability goes to one, i.e., $\mathcal{P}_{\rm so}\rightarrow1$, indicating that the secrecy outage event happens with high probability when there are very few active legitimate transmitters.
This observation illustrates that increasing the number of concurrent transmissions in the wireless networks is helpful to improve the secrecy performance, i.e., increasing the interference in the wireless networks is helpful from the aspect of the security performance.

\subsection{Dynamic Scenario}
In the dynamic scenario, the delay consists of the queueing delay and the transmission delay. The analysis of the delay in the dynamic scenario requires the combination of the stochastic geometry and the queueing theory, which induces the interacting queues problem.
In order to bypass these difficulties, we propose an approach to approximate the delay in the dynamic scenario. We first simplify the interaction between the queues and assume that all interfering transmitters are active with the same probability $q$. Then, we derive the connection failure probability $\mathcal{P}_{\rm cf}$ for the typical legitimate link in each time slot. The service rate of the queue at the typical legitimate transmitter will be $p(1-\mathcal{P}_{\rm cf})$, which means that the probability for a packet at the typical user being scheduled and successfully transmitted in a time slot is $p(1-\mathcal{P}_{\rm cf})$.
Then, $\rho=\frac{\xi}{p(1-\mathcal{P}_{\rm cf})}$ represents the average proportion of time occupied by a transmitter.
When the queue is empty, the transmitter is inactive, and when the transmitter is occupied, it is active with probability $p$ due to the random access.
Therefore, the active probability of the typical legitimate link is $\min\{p\rho,1\}=\min\{\frac{\xi}{1-\mathcal{P}_{\rm cf}},1\}$.
Note that $\mathcal{P}_{\rm cf}$ is a function of $q$. By solving the equation $\min\{\frac{\xi}{1-\mathcal{P}_{\rm cf}},1\}=q$, we obtain a solution $q=q^\star$ which could be used to derive the approximated delay for the wireless network.

Since the transmissions of the typical user in different time slots are affected by the independent fading and the independent random access, the probability for a transmission attempt to be scheduled and successful in all time slots is the same, which is $p(1-\mathcal{P}_{\rm cf})$.
Therefore, the queueing system at the typical transmitter is a Geo/G/1 queue, or a discrete-time single server retrial queue \cite{atencia2004discrete, falin1990survey}. In the Geo/G/1 queue, the arrival process of the packets is a Bernoulli process with intensity $\xi$ packets per time slot. The arrival process of the packets is called the geometric arrival since the probability that a packet arrives in a time slot is $\xi$, and the number of time slots between two adjacent arrivals is a geometric random variable. The success probability is $p(1-\mathcal{P}_{\rm cf})$ and the service times of packets are i.i.d. with geometric distribution. From \cite{atencia2004discrete}, we get the mean delay $\mathcal{D}$ for the Geo/G/1 queue with given $\mathcal{P}_{\rm cf}$ as
\begin{eqnarray}
\mathcal{D} = \left\{ \begin{array}{ll}
\frac{1-\xi}{p(1-\mathcal{P}_{\rm cf})-\xi} & \textrm{if $p(1-\mathcal{P}_{\rm cf})> \xi$}\\
\infty & \textrm{if $p(1-\mathcal{P}_{\rm cf})\leq \xi$}.
\end{array} \right. \label{eqn:randomdelay}
\end{eqnarray}

With the above discussions, we could obtain the mean delay in the dynamic scenario in the following theorem.
\begin{thm}
\label{thm:delay2}
The mean delay of the typical legitimate link in the dynamic scenario is
\begin{equation}
\mathcal{D} \simeq \frac{1-1/\xi}{1+\frac{p\lambda_l\pi r_0^2 \theta_t^\delta C(\delta)}{\mathcal{W}(-\xi\lambda_l\pi r_0^2 \theta_t^\delta C(\delta))}}. \label{eqn:dyndelay}
\end{equation}
\end{thm}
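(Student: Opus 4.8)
The plan is to carry out the three-step program sketched in the text preceding the statement: (i) obtain a closed form for the connection failure probability $\mathcal{P}_{\rm cf}$ as a function of the common activity probability $q$ of the interferers; (ii) solve the self-consistency equation $q=\min\{\xi/(1-\mathcal{P}_{\rm cf}),1\}$ for the fixed point $q^\star$; and (iii) substitute $q^\star$ into the Geo/G/1 delay expression (\ref{eqn:randomdelay}) and simplify.

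First I would compute $\mathcal{P}_{\rm cf}$. Under the mean-field simplification, the active interferers form an independently thinned PPP of intensity $q\lambda_l$. Since the desired-link fading $h_{x_0}$ is $\exp(1)$, the connection success probability is the Laplace transform of the aggregate interference $I$ evaluated at $\theta_t r_0^\alpha$:
\begin{equation}
1-\mathcal{P}_{\rm cf}=\mathbb{E}\big[\exp(-\theta_t r_0^\alpha I)\big]=\exp\!\big(-q\lambda_l\pi r_0^2\theta_t^\delta C(\delta)\big),
\end{equation}
where the second equality follows from the PPP Laplace functional together with $\alpha\delta=2$ and $\mathbb{E}[h^\delta]\Gamma(1-\delta)=C(\delta)$. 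This is precisely the per-slot, spatially averaged analogue of the exponent appearing in Theorem \ref{thm:1}.

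Next I would solve the fixed point. Writing $K=\lambda_l\pi r_0^2\theta_t^\delta C(\delta)$ and restricting to the stable regime where the $\min$ is not saturated, the self-consistency equation reads $q=\xi\exp(qK)$, i.e., $q e^{-qK}=\xi$. The substitution $u=-qK$ turns this into $u e^u=-\xi K$, which is solved by the Lambert W function, giving $u=\mathcal{W}(-\xi K)$ and hence $q^\star=-\mathcal{W}(-\xi K)/K$. At the fixed point $1-\mathcal{P}_{\rm cf}=\xi/q^\star$, so $p(1-\mathcal{P}_{\rm cf})=p\xi/q^\star$; substituting into (\ref{eqn:randomdelay}) gives $\mathcal{D}=\frac{1-\xi}{p\xi/q^\star-\xi}=\frac{1-\xi}{\xi}\cdot\frac{q^\star}{p-q^\star}$. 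Replacing $q^\star$ by $-\mathcal{W}(-\xi K)/K$ and using $\mathcal{W}(-\xi K)=-q^\star K$ in routine algebra then recasts this as the advertised ratio with numerator $1-1/\xi$ and denominator $1+pK/\mathcal{W}(-\xi K)$.

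I expect the fixed-point step to be the main obstacle, for two reasons. First, recognizing that $q=\xi e^{qK}$ is a Lambert-W equation and selecting the correct branch so that $0<q^\star<1$ requires care: the equation generically has two roots (the two real branches of $\mathcal{W}$), only the smaller of which is a stable operating point, so one must argue that the principal branch $\mathcal{W}_0$ is the physically meaningful solution and note that a real root exists only when $\xi K\le 1/e$, which is the stability threshold. Second, the entire construction rests on the mean-field approximation that interferers are active independently with a single probability $q$, which discards the spatial and temporal correlation inherent in the interacting-queues problem; this is why the statement carries $\simeq$ rather than equality, and I would flag explicitly that the derivation yields an approximation rather than an exact identity.
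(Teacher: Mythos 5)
Your proposal is correct and follows essentially the same route as the paper's own proof: compute $\mathcal{P}_{\rm cf}$ under the common-activity-probability (mean-field) assumption via the PPP generating/Laplace functional, solve the fixed-point equation $q=\xi e^{qK}$ with the Lambert-$\mathcal{W}$ function, and substitute $p(1-\mathcal{P}_{\rm cf})=p\xi/q^\star$ into the Geo/G/1 delay formula (\ref{eqn:randomdelay}). Your added remarks on selecting the principal branch $\mathcal{W}_0$, the existence condition $\xi K\le 1/e$, and the approximation status of the mean-field step are points the paper leaves implicit, but they do not change the argument.
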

\begin{proof}
By assuming that all interfering transmitters are active with the same probability $q$, we get the connection failure probability as
\begin{eqnarray}
\mathcal{P}_{\rm cf}&=&1-\mathbb{E}_{\Phi_l}\Big[\prod_{y\in\Phi_l}\Big(\frac{q}{1+\theta_t r_0^\alpha|y|^{-\alpha}}+1-q\Big)\Big]\nonumber\\
&=&1-\exp\big(-q\lambda_l\pi r_0^2 \theta_t^\delta C(\delta) \big).
\end{eqnarray}
Considering the equation $\min\{\frac{\xi}{1-\mathcal{P}_{\rm cf}},1\}=q$, we get
\begin{eqnarray}
\min\bigg\{\frac{\xi}{\exp\big(-q\lambda_l\pi r_0^2 \theta_t^\delta C(\delta) \big)},1\bigg\}=q.
\end{eqnarray}

Solving the above equation, we get the solution
\begin{eqnarray}
q^\star=\min\bigg\{-\frac{\mathcal{W}(-\xi \lambda_l\pi r_0^2 \theta_t^\delta C(\delta))}{\lambda_l\pi r_0^2 \theta_t^\delta C(\delta)},1\bigg\}.
\end{eqnarray}
where $\mathcal{W}(z)$ is the Lambert-$\mathcal{W}$ function with the defining equation $\mathcal{W}(z)\exp(\mathcal{W}(z))=z$.
Then, the connection failure probability is
\begin{eqnarray}
\mathcal{P}_{\rm cf}=1-\frac{\xi}{q^\star}=\min\bigg\{1+\frac{\xi\lambda_l\pi r_0^2 \theta_t^\delta C(\delta)}{\mathcal{W}(-\xi\lambda_l\pi r_0^2 \theta_t^\delta C(\delta))},1-{\xi}\bigg\}.
\end{eqnarray}
Having derived the connection failure probability, the mean delay is
\begin{eqnarray}
\mathcal{D} =
\frac{1-\xi}{p(1-\mathcal{P}_{\rm cf})-\xi} =\frac{1-1/\xi}{1+\frac{p\lambda_l\pi r_0^2 \theta_t^\delta C(\delta)}{\mathcal{W}(-\xi\lambda_l\pi r_0^2 \theta_t^\delta C(\delta))}}. \label{eqn:dyndelay2}
\end{eqnarray}
Therefore, we get the results in the theorem.
\end{proof}

It is hard to interpret the relationship between the mean delay and the system parameters since the property of the Lambert-$\mathcal{W}$ function is not intuitive. However, we could further simplify the above results for some special cases.
Note that when $x\rightarrow0$, the asymptotic expansions of the Lambert-$\mathcal{W}$ function is
\begin{eqnarray}
\mathcal{W}(x)=\sum_{n=1}^\infty\frac{(-n)^{n-1}}{n!}=x-x^2+\frac{3}{2}x^3-\frac{8}{3}x^4+... . \label{eqn:expand}
\end{eqnarray}
Therefore, we have $\mathcal{W}(x)\sim x$ when $x\rightarrow0$. When $\lambda_l r_0^2 \theta_t^\delta C(\delta)\rightarrow0$, the result given by Theorem \ref{thm:delay2} can be simplified into
\begin{eqnarray}
\mathcal{D} \simeq \frac{1-1/\xi}{1+\frac{p\lambda_l\pi r_0^2 \theta_t^\delta C(\delta)}{\mathcal{W}(-\xi\lambda_l\pi r_0^2 \theta_t^\delta C(\delta))}} \simeq \frac{1-1/\xi}{1-p/\xi}
\end{eqnarray}

The above result indicates that when either $\lambda_l\rightarrow0$, $r_0\rightarrow0$, or $\theta_t\rightarrow0$ holds, the delay could be approximated into simple form. Then, the relationship between the mean delay $\mathcal{D}$, the transmit probability $p$ and the arrival rate $\xi$ could be easily obtained.

The secrecy failure event happens in the transmitting process and is not affected by the queueing process.
Thus, the secrecy outage probability in the dynamic scenario is similar to that in the backlogged scenario since the secrecy outage is only determined by the delivering process of a packet.
The only difference between the secrecy outage probability in the backlogged scenario and that in the dynamic scenario is that the active probabilities for the two scenarios are different.
In the backlogged scenario, the active probability for the interfering transmitters is the transmit probability $p$ since all transmitters are backlogged and always have packets to transmit. However, in the dynamic scenario, the active probability is $q^\star$. Therefore, we get the following theorem, which gives the secrecy outage probability in the dynamic scenario.

\begin{thm}
\label{thm:4}
The secrecy outage probability of the typical link in the dynamic scenario can be approximated by
\begin{equation}
\mathcal{P}_{\rm so} \simeq \frac{\exp\big(\frac{\lambda_e}{q^\star\lambda_l \theta_e^\delta C(\delta)}\big)-1}{\exp\big(\frac{\lambda_e}{q^\star\lambda_l \theta_e^\delta C(\delta)}\big)+\exp\big(-
q^\star\lambda_l \pi r_0^2\theta_t^\delta C(\delta)\big)-1}, \label{eqn:th4pso}
\end{equation}
where $q^\star$ is given by
\begin{eqnarray}
q^\star=\min\bigg\{-\frac{\mathcal{W}(-\xi \lambda_l\pi r_0^2 \theta_t^\delta C(\delta))}{\lambda_l\pi r_0^2 \theta_t^\delta C(\delta)},1\bigg\}. \label{eqn:qstar}
\end{eqnarray}
\end{thm}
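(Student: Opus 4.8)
The plan is to derive Theorem~\ref{thm:4} as an immediate consequence of Theorem~\ref{thm:2} through a single substitution of the interferers' activity level. The starting point is the observation that the secrecy outage probability in the dynamic scenario still obeys the representation~(\ref{eqn:sop}),
\begin{equation}
\mathcal{P}_{\rm so}=\mathbb{E}_{\Phi_l,\Phi_e}\bigg[\frac{\mathcal{P}_{\rm sf}^{\Phi_l,\Phi_e}}{1-\mathcal{P}_{\rm cf}^{\Phi_l}+\mathcal{P}_{\rm sf}^{\Phi_l,\Phi_e}\mathcal{P}_{\rm cf}^{\Phi_l}}\bigg],
\end{equation}
because a secrecy outage is determined solely by the delivery of a single packet: conditioned on $\Phi_l$, the number of \emph{scheduled} attempts $N_r^{\Phi_l}$ until decoding is geometric with success probability $1-\mathcal{P}_{\rm cf}^{\Phi_l}$, and this geometric structure is untouched by the queue dynamics. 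First I would stress that the typical transmitter's own access probability $p$ does \emph{not} appear in this expression, since $N_r^{\Phi_l}$ counts scheduled attempts rather than elapsed slots; consequently every occurrence of $p$ in Theorem~\ref{thm:2} must have originated from the interfering transmitters.

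I would then pin down the interferers' activity in each scenario. In the backlogged case every interferer always has a packet and is active with probability $p$, whereas in the dynamic case the decoupling approximation underlying Theorem~\ref{thm:delay2} renders each interferer active with the fixed probability $q^\star$ of~(\ref{eqn:qstar}). The conditional connection-failure probability is therefore produced by interference from a process thinned to activity $q^\star$, and the probability-generating-functional computation already performed in the proof of Theorem~\ref{thm:delay2} yields $1-\mathcal{P}_{\rm cf}=\exp(-q^\star\lambda_l\pi r_0^2\theta_t^\delta C(\delta))$, the second exponential in~(\ref{eqn:th4pso}). The very same thinning governs the interference at every eavesdropper, so the secrecy-failure analysis of Appendix~\ref{appendix:a} reproduces the factor $\exp\bigl(\lambda_e/(q^\star\lambda_l\theta_e^\delta C(\delta))\bigr)$ in place of its backlogged counterpart $\exp\bigl(\lambda_e/(p\lambda_l\theta_e^\delta C(\delta))\bigr)$.

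With these two ingredients, I would simply rerun the argument of Appendix~\ref{appendix:a}: insert the conditional failure probabilities into~(\ref{eqn:sop}), apply the identical spatial-averaging (decoupling) approximation over $\Phi_l$ and $\Phi_e$, and collect terms. Because the manipulation is structurally identical to that of Theorem~\ref{thm:2} with $p$ replaced by $q^\star$ everywhere it appears, the closed form~(\ref{eqn:th4pso}) drops out, with $q^\star$ furnished by Theorem~\ref{thm:delay2}.

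The crux is not any fresh calculation but the justification of the substitution itself. I would have to argue that (i)~the single-parameter decoupling of the interacting queues, originally introduced for the \emph{delay} analysis, remains a legitimate approximation when transplanted to the \emph{secrecy} event at the eavesdroppers, and (ii)~$p\to q^\star$ is genuinely the \emph{only} change, i.e. the typical link's transmit probability really does cancel out of~(\ref{eqn:sop}). Both points rest on the assertion that, in the dynamic scenario, the interference at the legitimate receiver and at each eavesdropper is governed by the common activity $q^\star$; making this rigorous, and controlling the error of the decoupling and spatial-averaging steps that the ``$\simeq$'' in the statement already acknowledges, is the only delicate part of the proof.
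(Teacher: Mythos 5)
Your proposal is correct and takes essentially the same route as the paper: the paper's proof of Theorem~\ref{thm:4} is precisely the observation that the derivation of Theorem~\ref{thm:2} (Appendix~\ref{appendix:a}) carries over once the interferers' activity probability $p$ is replaced by $q^\star$ from Theorem~\ref{thm:delay2}, which is exactly your substitution argument. If anything, your explicit justification that $p$ enters the representation~(\ref{eqn:sop}) only through the interfering transmitters' activity---so that $p\to q^\star$ is genuinely the only change---is spelled out more carefully than in the paper, whose stated proof is a one-line reference to Theorem~\ref{thm:2}.
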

\begin{proof}
The proof is similar to that of Theorem \ref{thm:2}.
\end{proof}

The equation (\ref{eqn:th4pso}) shows that $\mathcal{P}_{\rm so} \simeq 0$ as $\lambda_e\rightarrow0$ or $\theta_e\rightarrow\infty$, indicating that when the number of eavesdroppers is very small or when the SIR threshold to eavesdrop the message is very large, the secrecy outage event hardly ever happens. On the other hand, when $\lambda_e\rightarrow\infty$ or $\theta_e\rightarrow0$, we get $\mathcal{P}_{\rm so} \simeq 1$, indicating that when the eavesdroppers are densely deployed or when the messages are easily intercepted by the eavesdroppers, the secrecy outage event almost always happens.

Similar to the discussions after Theorem \ref{thm:delay2}, we can also simplify the result given by Theorem \ref{thm:4}.
Note that $\mathcal{W}(x)\sim x$ as $x\rightarrow0$. When $r_0^2 \theta_t^\delta C(\delta)\rightarrow0$, the active probability given by the equation (\ref{eqn:qstar}) can be simplified as
\begin{eqnarray}
q^\star=\min\{{\xi},1\}.
\end{eqnarray}
Note that if $\xi>1$, the arrival rate is larger than the transmit probability, thus, the queues in the network will be unstable.
A more relevant case is $\xi<1$, which is a necessary condition for the queues to be stable. If both $ r_0^2 \theta_t^\delta C(\delta)\rightarrow0$ and $\xi<1$ hold, the active probability will be
\begin{eqnarray}
q^\star={\xi}.
\end{eqnarray}
Plugging $q^\star={\xi}$ into (\ref{eqn:th4pso}), we get the secrecy outage probability when $r_0^2 \theta_t^\delta C(\delta)\rightarrow0$ as
\begin{eqnarray}
\mathcal{P}_{\rm so} &\simeq& \frac{\exp\big(\frac{\lambda_e}{{\xi}\lambda_l \theta_e^\delta C(\delta)}\big)-1}{\exp\big(\frac{\lambda_e}{{\xi}\lambda_l \theta_e^\delta C(\delta)}\big)+\exp\big(-
{\xi}\lambda_l \pi r_0^2\theta_t^\delta C(\delta)\big)-1} \nonumber\\
&\simeq& \frac{\exp\big(\frac{\lambda_e}{{\xi}\lambda_l \theta_e^\delta C(\delta)}\big)-1}{\exp\big(\frac{\lambda_e}{{\xi}\lambda_l \theta_e^\delta C(\delta)}\big)}\nonumber\\
&=& 1- \exp\Big(-\frac{\lambda_e}{{\xi}\lambda_l \theta_e^\delta C(\delta)}\Big). \label{eqn:th4pso2}
\end{eqnarray}
The relationship between the secrecy outage probability and the system parameters, such as the intensity of eavesdroppers $\lambda_e$, the intensity of legitimate transmitters $\lambda_l$, the SIR threshold $\theta_e$ and so on, could be observed from the equation (\ref{eqn:th4pso2}) directly.

\section{Performance of Message Split}
\label{sec:perform}
In this section, we propose and analyze a simple transmission mechanism, in which a confidential message is divided into two packets to be delivered independently. A confidential message is successfully decoded at the legitimate receiver only when both of the two packets from the message are successfully decoded at the receiver. Meanwhile, the perfect security could be guaranteed if at least one of the two packets from the same message cannot be decoded by an eavesdropper.
This mechanism has a direct effect on the delay and may also affect the security performance. Therefore, in this section, we explore this simple transmission mechanism to provide a guidance to tradeoff the delay and the security of the wireless network.

\subsection{Backlogged Scenario}
Since one message is split into two packets, the size of each packet is halved. When fixing the duration of the time slots, the requirement for the rate of the confidential message could be reduced from $R_s$ to $R_s'=R_s/2$. The threshold for the rate of the eavesdropping link is also reduced from $R_e$ to $R_e'=R_e/2$. Then, the rate threshold of codewords becomes $R_t'=R_s'+R_e'=(R_s+R_e)/2$.

Note that the mean delay is doubled since both of the two packets should be successful. By the similar derivations as Theorem \ref{thm:1}, we get the following corollary which gives the mean delay in the case of message split.

\begin{cor}
\label{cor:1}
The mean delay of the typical legitimate link in the backlogged scenario with message split is
\begin{equation}
\mathcal{D} = \frac{2}{p}\exp\left({\lambda_l \pi r_0^2\theta_t'^{\delta}C(\delta)p}{(1-p)^{\delta-1}}\right), \label{eqn:associateP1}
\end{equation}
where $\theta_t'=2^{R_t'}-1$.
\end{cor}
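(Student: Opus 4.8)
The plan is to reduce the claim to Theorem~\ref{thm:1}, since message split changes only the rate threshold and introduces a multiplicative factor of two in the mean delay. First I would record how splitting affects the thresholds, as set up in the text preceding the corollary: halving each packet lowers the confidential rate to $R_s'=R_s/2$ and the eavesdropping threshold to $R_e'=R_e/2$, so that $R_t'=R_s'+R_e'=R_t/2$ and correspondingly $\theta_t'=2^{R_t'}-1$. Because the connection failure event enters the delay analysis only through the SIR threshold $\theta_t$, the per-packet connection failure probability in the split scheme is obtained from the original one by replacing $\theta_t$ with $\theta_t'$ throughout.

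Next I would analyze the two packets separately. For a fixed realization of $\Phi_l$, each packet is retransmitted until it is decoded, so its number of attempts is geometric with success probability $1-\mathcal{P}_{\rm cf}^{\Phi_l}$ evaluated at $\theta_t'$; since the fading and the random access are i.i.d.\ across slots, the two packets experience statistically identical per-slot conditions over the same topology. Hence each packet individually has a mean local delay equal to the expression of Theorem~\ref{thm:1} with $\theta_t$ replaced by $\theta_t'$, namely $\frac{1}{p}\exp\bigl(\lambda_l\pi r_0^2\theta_t'^{\delta}C(\delta)p(1-p)^{\delta-1}\bigr)$.

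Finally I would combine the two packets. Since a message is decoded only when both of its packets are decoded, and a single transmitter serves its head-of-line packets one at a time, the message delay is the sum of the two per-packet delays. By linearity of expectation the mean message delay is the sum of the two per-packet means, and as both packets share the same delay distribution this sum equals twice the single-packet mean, producing the factor $\frac{2}{p}$ in~(\ref{eqn:associateP1}). The main subtlety, and the place a naive argument could stumble, is the apparent worry about interference correlation between the two transmissions; the point to emphasize is that no independence is required here, because linearity of expectation holds for any joint law of the two delays while each marginal mean is supplied directly by Theorem~\ref{thm:1}.
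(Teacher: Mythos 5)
Your proposal is correct and follows essentially the same route as the paper, which proves the corollary only by remarking that the mean delay is doubled because both packets must be delivered and that the derivation is otherwise identical to Theorem~\ref{thm:1} with $\theta_t$ replaced by $\theta_t'$. Your added care---summing the two per-packet conditional means given $\Phi_l$ before averaging over the point process, so that the correlation of the two packets' delays through the shared topology is harmless by linearity of expectation---is exactly the reasoning the paper leaves implicit.
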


Comparing with the result given by (\ref{eqn:localdelay}), we observe that the mean delay is doubled and $\theta_t$ is replaced by $\theta_t'$.
However, since the effect of the two changes are in the opposite direction, whether the mean delay is increased or not is still uncertain, which depends on the parameters such as the transmit probability $p$, the intensity of legitimate transmitters $\lambda_l$, and so on.

In order to intercept the confidential message, both of the two packets from the same message should be successfully decoded at the eavesdropper. Therefore, by the similar derivations as that in Theorem \ref{thm:2}, we get
\begin{cor}
\label{cor:2}
The secrecy outage probability of the typical link in the backlogged scenario with message split can be approximated by
\begin{equation}
\mathcal{P}_{\rm so} \simeq \bigg(\frac{\exp\big(\frac{\lambda_e}{p\lambda_l \theta_e'^\delta C(\delta)}\big)-1}{\exp\big(\frac{\lambda_e}{p\lambda_l \theta_e'^\delta C(\delta)}\big)+\exp\big(-
p\lambda_l \pi r_0^2\theta_t'^\delta C(\delta)\big)-1}\bigg)^2,
\end{equation}
where $\theta_e'=2^{R_e'}-1$.
\end{cor}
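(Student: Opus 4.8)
The plan is to reduce the message-level secrecy outage to two copies of the single-packet analysis underlying Theorem~\ref{thm:2}, evaluated at the halved rate thresholds. First I would translate the stated security criterion into events. By assumption the split message is intercepted exactly when \emph{both} of its two packets are decoded by some eavesdropper, so if $A_1$ and $A_2$ denote the secrecy-outage events of packet~$1$ and packet~$2$ over their respective retransmission-based deliveries, the message suffers a secrecy outage precisely on $A_1\cap A_2$. Each $A_i$ is governed by the reduced thresholds, because splitting halves the confidential rate to $R_s'=R_s/2$ and the codeword rate to $R_t'=(R_s+R_e)/2$; accordingly the per-attempt connection- and secrecy-failure probabilities are those of the unsplit case with $\theta_t,\theta_e$ replaced by $\theta_t'=2^{R_t'}-1$ and $\theta_e'=2^{R_e'}-1$.

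Next I would condition on the static topology $(\Phi_l,\Phi_e)$, which is the only coupling between the two deliveries. Since the fading and the random access are independent across time slots and across the two packets, $A_1$ and $A_2$ are conditionally independent given $(\Phi_l,\Phi_e)$, whence $\mathbb{P}(A_1\cap A_2\mid\Phi_l,\Phi_e)=\big(\mathbb{P}(A_1\mid\Phi_l,\Phi_e)\big)^2$. For a single packet, $\mathbb{P}(A_1\mid\Phi_l,\Phi_e)$ is exactly the conditional quantity assembled in the derivation of~(\ref{eqn:sop}): with a geometric number of retransmissions of success probability $1-\mathcal{P}_{\rm cf}^{\Phi_l}$ and per-attempt secrecy failure $\mathcal{P}_{\rm sf}^{\Phi_l,\Phi_e}$, the same summation of the geometric series gives $\mathcal{P}_{\rm sf}^{\Phi_l,\Phi_e}/\big(1-\mathcal{P}_{\rm cf}^{\Phi_l}+\mathcal{P}_{\rm sf}^{\Phi_l,\Phi_e}\mathcal{P}_{\rm cf}^{\Phi_l}\big)$, now with all thresholds primed.

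Finally I would take the expectation over $(\Phi_l,\Phi_e)$ and reuse the closed-form machinery of Theorem~\ref{thm:2}. Applying to a single factor the same moment/mean-field approximations that convert~(\ref{eqn:sop}) into~(\ref{eqn:securityoutage}), but with $\theta_t',\theta_e'$ in place of $\theta_t,\theta_e$, reproduces the bracketed single-packet expression in the statement; squaring it then yields the claimed formula. The step I expect to be the genuine obstacle, and the origin of the ``$\simeq$'', is the interchange $\mathbb{E}_{\Phi_l,\Phi_e}\big[\big(\mathbb{P}(A_1\mid\Phi_l,\Phi_e)\big)^2\big]\approx\big(\mathbb{E}_{\Phi_l,\Phi_e}\,\mathbb{P}(A_1\mid\Phi_l,\Phi_e)\big)^2$. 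Because both packets see the \emph{same} realization of $\Phi_l,\Phi_e$, the events $A_1,A_2$ are positively correlated, and by Jensen's inequality the square of the marginal (closed-form) outage underestimates the true value; I would therefore justify the exact square only as an approximation in which the two packets are treated as experiencing effectively independent network realizations, in line with the approximations already made in Theorem~\ref{thm:2}.
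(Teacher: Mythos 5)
Your proposal is correct and takes essentially the same route as the paper, which derives Corollary \ref{cor:2} exactly this way: since both packets must be intercepted, it squares the Theorem \ref{thm:2} expression evaluated at the primed thresholds $\theta_t',\theta_e'$, relying on the per-slot independence of fading and random access. Your explicit flagging of the interchange $\mathbb{E}_{\Phi_l,\Phi_e}\big[(\mathbb{P}(A_1\mid\Phi_l,\Phi_e))^2\big]\approx\big(\mathbb{E}_{\Phi_l,\Phi_e}[\mathbb{P}(A_1\mid\Phi_l,\Phi_e)]\big)^2$, with the correct Jensen direction (underestimate due to the shared topology), is a sound refinement of an approximation the paper leaves implicit.
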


The secrecy outage probability in the backlogged scenario with message split is closed-form. The result also reveals that $\mathcal{P}_{\rm so}\rightarrow0$ as $\lambda_e\rightarrow0$, and $\mathcal{P}_{\rm so}\rightarrow1$ as $\lambda_e\rightarrow+\infty$. Though whether the message split decreases the secrecy outage probability is still not intuitive from Corollary \ref{cor:2}, the numerical analysis could be conducted to evaluate the effect of message split.

\subsection{Dynamic Scenario}
The analysis of the dynamic scenario is different from that of the backlogged scenario when the message split is introduced in the transmission mechanism.
The arrival process of the messages is still a Bernoulli process of arrival rate $\xi$. Since each message is split into two packets which are transmitted independently, at least two time slots are needed to successfully deliver one message. However, the queueing system is still a Geo/G/1 queue because the number of
time slots between two adjacent arrivals is still a geometric random variable.
When each message is divided into two packets, the service time of each message is the summation of the service time of the two packets. Since the service time of each packet is a geometric random variable,
the total service time for each message is the summation of two geometric distributed random variables.
The analysis of this queueing system is rather difficult because the service rate is not a well know distribution. Thus, we use the formula for the mean delay obtained from the M/M/1 queueing system to approximate the mean delay in the dynamic scenario with message split. With these discussions, we get the following corollary, which gives the mean delay in the dynamic scenario with message split.

\begin{cor}
\label{cor:delay2}
The mean delay of the typical legitimate link in the dynamic scenario with message split is
\begin{equation}
\mathcal{D} \simeq  -\frac{1}{\frac{p\xi\lambda_l\pi r_0^2 \theta_t'^\delta C(\delta)}{2\mathcal{W}(-\xi\lambda_l\pi r_0^2 \theta_t'^\delta C(\delta))}+\xi}, \label{eqn:dyndelay1}
\end{equation}
where $\theta_t'=2^{R_t'}-1$.
\end{cor}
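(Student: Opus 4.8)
The plan is to mirror the derivation of Theorem \ref{thm:delay2}, changing only the two places where message split alters the picture: the per-message service time and the codeword threshold. First I would model the queue at each legitimate transmitter at the \emph{message} level. Messages still arrive as a Bernoulli process of rate $\xi$, but each message now requires the successful delivery of two packets, each served geometrically with per-slot success probability $p(1-\mathcal{P}_{\rm cf})$. Hence the service time of one message is a sum of two i.i.d.\ geometric random variables with mean $2/[p(1-\mathcal{P}_{\rm cf})]$. Since this convolution is not a standard distribution, I would adopt the stated M/M/1 surrogate: take the message service rate to be $\mu=p(1-\mathcal{P}_{\rm cf})/2$ and apply the M/M/1 sojourn-time formula $\mathcal{D}\simeq 1/(\mu-\xi)$, giving $\mathcal{D}\simeq\big[p(1-\mathcal{P}_{\rm cf})/2-\xi\big]^{-1}$.

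Next I would compute $\mathcal{P}_{\rm cf}$ exactly as in the proof of Theorem \ref{thm:delay2}, but with the codeword threshold updated from $\theta_t$ to $\theta_t'=2^{R_t'}-1$, since splitting halves the codeword rate. Assuming every interfering transmitter is active with common probability $q$, the probability generating functional of the PPP $\Phi_l$ yields $\mathcal{P}_{\rm cf}=1-\exp(-q\lambda_l\pi r_0^2\theta_t'^\delta C(\delta))$. Writing $K'=\lambda_l\pi r_0^2\theta_t'^\delta C(\delta)$ for brevity, the fixed-point relation $\min\{\xi/(1-\mathcal{P}_{\rm cf}),1\}=q$ reduces on the unsaturated branch to $\xi e^{qK'}=q$, which I would solve with the same substitution $u=-qK'$ used for Theorem \ref{thm:delay2}, obtaining $q^\star=-\mathcal{W}(-\xi K')/K'$.

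The key simplification then comes from substituting $q^\star$ back into $1-\mathcal{P}_{\rm cf}=\exp(-q^\star K')=\exp(\mathcal{W}(-\xi K'))$ and invoking the defining identity $\mathcal{W}(z)e^{\mathcal{W}(z)}=z$, which rearranges to $e^{\mathcal{W}(z)}=z/\mathcal{W}(z)$. With $z=-\xi K'$ this gives the closed form $1-\mathcal{P}_{\rm cf}=-\xi K'/\mathcal{W}(-\xi K')$ (positive, since $\mathcal{W}(-\xi K')<0$ on the principal branch for small negative argument). Inserting this into $\mu=p(1-\mathcal{P}_{\rm cf})/2$ and then into $\mathcal{D}\simeq 1/(\mu-\xi)$, and finally factoring out $-1$, reproduces exactly the expression claimed in (\ref{eqn:dyndelay1}).

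I expect the main obstacle to be conceptual rather than computational: justifying the M/M/1 replacement of the true Geo/G/1 queue whose service time is a convolution of two geometrics, and being consistent about where the factor of two enters. The factor of two belongs to the service rate $\mu$ (each message carries twice the work), whereas the self-consistency equation that fixes the interference level $q^\star$ is kept in the single-packet form of Theorem \ref{thm:delay2} with $\theta_t'$ substituted; this bookkeeping is precisely what yields the argument $-\xi K'$, rather than $-2\xi K'$, inside the Lambert-$\mathcal{W}$ function. Once this choice is pinned down, the remaining steps — the Lambert-$\mathcal{W}$ substitution and the final rearrangement — are routine.
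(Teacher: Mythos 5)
Your proposal is correct and follows essentially the same route as the paper's own proof: the M/M/1 surrogate $\mathcal{D}\simeq 1/(\mu-\xi)$ with message service rate $\mu=p(1-\mathcal{P}_{\rm cf})/2$, the connection failure probability from the single-packet fixed point $\min\{\xi/(1-\mathcal{P}_{\rm cf}),1\}=q$ with $\theta_t$ replaced by $\theta_t'$, and the Lambert-$\mathcal{W}$ solution $q^\star=-\mathcal{W}(-\xi K')/K'$. Your explicit use of the identity $e^{\mathcal{W}(z)}=z/\mathcal{W}(z)$ to recover $1-\mathcal{P}_{\rm cf}=-\xi K'/\mathcal{W}(-\xi K')$ is just a restatement of the paper's relation $1-\mathcal{P}_{\rm cf}=\xi/q^\star$, and your bookkeeping of where the factor of two enters matches the paper exactly.
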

\begin{proof}
By approximating the queueing system with the M/M/1 model, the mean delay can be approximated as
\begin{equation}
\mathcal{D} \simeq \frac{1}{\mu-\xi}, \label{eqn:mm1}
\end{equation}
where $\xi$ is the arrival rate of the messages, and $\mu$ is the service rate.
The mean delay in the dynamic scenario with message split could be obtained by the service rate of the Geo/G/1 queue.

Similar to the derivations in Theorem \ref{thm:delay2}, the connection failure probability of a packet is
\begin{eqnarray}
\mathcal{P}_{\rm cf}&=&\min\bigg\{1+\frac{\xi\lambda_l\pi r_0^2 \theta_t'^\delta C(\delta)}{\mathcal{W}(-\xi\lambda_l\pi r_0^2 \theta_t'^\delta C(\delta))},1-{\xi}\bigg\},
\end{eqnarray}
where $\theta_t'=2^{R_t'}-1$. The service time for each packet is a geometric random variable with success probability $p(1-\mathcal{P}_{\rm cf})$, and the mean service time for each packet is $\frac{1}{p(1-\mathcal{P}_{\rm cf})}$.
Thus, the mean service time for each message is $\frac{2}{p(1-\mathcal{P}_{\rm cf})}$, and the mean service rate for each message is $\frac{p(1-\mathcal{P}_{\rm cf})}{2}$.
Plugging into the equation (\ref{eqn:mm1}), we get the mean delay as
\begin{eqnarray}
\mathcal{D} &\simeq&  -\frac{1}{\frac{p\xi\lambda_l\pi r_0^2 \theta_t'^\delta C(\delta)}{2\mathcal{W}(-\xi\lambda_l\pi r_0^2 \theta_t'^\delta C(\delta))}+\xi}.
\end{eqnarray}
Therefore, we get the result in the corollary.
\end{proof}

By the expansion of the Lambert-$\mathcal{W}$ function given by the equation (\ref{eqn:expand}), we get the mean delay when $\lambda_l\pi r_0^2 \theta_t'^\delta C(\delta)\rightarrow0$ as
\begin{eqnarray}
\mathcal{D} \simeq \frac{1}{\frac{p}{2(1+\xi\lambda_l\pi r_0^2 \theta_t'^\delta C(\delta))}-\xi}\simeq \frac{2}{p-2\xi}.
\end{eqnarray}

Furthermore, when $p=1$ and $\xi\rightarrow0$, the mean delay $\mathcal{D}$ is exactly two time slots, which is the case where the transmission attempts of a packet is scheduled with probability one and is always successful when it is scheduled.

To eavesdrop the confidential message at an eavesdropper in the dynamic scenario with message split, both of the two packets from the same message should be successfully decoded at the eavesdropper. By the similar derivations as that in Theorem \ref{thm:4}, we get
\begin{cor}
\label{cor:4}
The secrecy outage probability of the typical link in the dynamic scenario with message split can be approximated by
\begin{equation}
\mathcal{P}_{\rm so} \simeq \bigg(\frac{\exp\big(\frac{\lambda_e}{q^\star\lambda_l \theta_e'^\delta C(\delta)}\big)-1}{\exp\big(\frac{\lambda_e}{q^\star\lambda_l \theta_e'^\delta C(\delta)}\big)+\exp\big(-
q^\star\lambda_l \pi r_0^2\theta_t'^\delta C(\delta)\big)-1}\bigg)^2,
\end{equation}
where $q^\star$ is given by
\begin{eqnarray}
q^\star=\min\bigg\{-\frac{\mathcal{W}(-\xi \lambda_l\pi r_0^2 \theta_t'^\delta C(\delta))}{\lambda_l\pi r_0^2 \theta_t'^\delta C(\delta)},1\bigg\},
\end{eqnarray}
$\theta_t'=2^{R_t'}-1$, and $\theta_e'=2^{R_e'}-1$.
\end{cor}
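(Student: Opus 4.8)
The plan is to obtain the statement by composing two facts already established in the excerpt: the single-packet secrecy outage formula for the dynamic scenario in Theorem \ref{thm:4}, and the ``squaring'' mechanism for message split introduced in the backlogged analysis of Corollary \ref{cor:2}. The starting observation, made in the paragraph preceding Theorem \ref{thm:4}, is that the secrecy outage event is decided entirely by the delivery process of a packet and is insensitive to the queueing dynamics; the only way the dynamic scenario enters is through the active probability of the interfering transmitters, which is $q^\star$ rather than the raw transmit probability $p$. Hence the first step is to reuse the derivation behind Theorem \ref{thm:2} (carried out in Appendix \ref{appendix:a}) verbatim, but with every occurrence of $p$ replaced by $q^\star$, yielding the per-packet secrecy outage probability in the dynamic scenario with the split thresholds $\theta_e'$ and $\theta_t'$ in place of $\theta_e$ and $\theta_t$.

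Next I would account for the message split at the level of the confidential message. A message is split into two packets delivered independently, and by the secrecy rule stated in Section \ref{sec:perform} perfect secrecy of the message is lost only when \emph{both} packets are decodable by some eavesdropper. Writing $A_1$ and $A_2$ for the events that packet $1$ and packet $2$ respectively are compromised over all of their scheduled transmission attempts, the message-level secrecy outage is $A_1 \cap A_2$. Treating the two packet-level events as independent---each already carrying the per-packet probability from the previous step---gives the product, i.e.\ the square of the single-packet expression. This is exactly the structure of Corollary \ref{cor:2}, so the closed form follows by substituting $p \mapsto q^\star$, $\theta_e \mapsto \theta_e'$, $\theta_t \mapsto \theta_t'$ into the squared bracket. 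Finally, $q^\star$ must be recomputed as the fixed point of the activity equation $\min\{\xi/(1-\mathcal{P}_{\rm cf}),1\}=q$ using the split codeword threshold $\theta_t'$, exactly as in the delay analysis of Corollary \ref{cor:delay2}; this produces the displayed expression for $q^\star$ with $\theta_t'$ inside the Lambert-$\mathcal{W}$ function.

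The step I expect to require the most care is the independence assumption used to pass from $\mathbb{P}(A_1 \cap A_2)$ to $\mathbb{P}(A_1)\mathbb{P}(A_2)$. The two packets of a single message share the same realizations of $\Phi_l$ and $\Phi_e$, so $A_1$ and $A_2$ are genuinely positively correlated through the common spatial configuration and cannot be exactly independent; consequently the squaring is an approximation rather than an identity, consistent with the ``$\simeq$'' in the statement. I would justify it by conditioning on $(\Phi_l,\Phi_e)$---given which the fadings and random-access decisions driving the two packets are independent across time slots---and then invoking the same spatial-averaging approximation already used in Appendix \ref{appendix:a} for the single-packet case, so that the correlation neglected here is of the same nature and order as the approximation underlying Theorem \ref{thm:4} itself. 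A secondary point to verify is the internal consistency of $q^\star$: the connection-failure probability feeding the activity fixed point is the per-packet one with threshold $\theta_t'$, matching the service-rate computation in Corollary \ref{cor:delay2}, which ensures that the same $q^\star$ governs both the interference seen by the legitimate link and that seen by the eavesdroppers.
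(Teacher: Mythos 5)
Your proposal is correct and follows essentially the same route as the paper, which obtains Corollary \ref{cor:4} by taking the Theorem \ref{thm:4} expression (itself Theorem \ref{thm:2} with $p\mapsto q^\star$), substituting the split thresholds $\theta_t',\theta_e'$, and squaring because both packets must be intercepted, with $q^\star$ recomputed from the fixed-point equation using $\theta_t'$ as in Corollary \ref{cor:delay2}. Your discussion of the correlation between the two packet-level events through the shared $(\Phi_l,\Phi_e)$, and its reduction to a Jensen-type approximation of the same order as that in Appendix \ref{appendix:a}, is a point the paper leaves implicit, but it does not change the argument.
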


Specially, when $\lambda_l\pi r_0^2 \theta_t'^\delta C(\delta)\rightarrow0$, using the expansion of the Lambert-$\mathcal{W}$ function given by (\ref{eqn:expand}), we obtain
\begin{eqnarray}
q^\star=\min\{{\xi},1\},
\end{eqnarray}
When the queue is stable, the arrival rate of messages must be smaller than one, i.e., $\xi<1$. Therefore, we have
\begin{eqnarray}
q^\star={\xi}.
\end{eqnarray}
Then, the secrecy outage probability given by Corollary \ref{cor:4} could be simplified as
\begin{eqnarray}
\mathcal{P}_{\rm so} &\simeq& \bigg(\frac{\exp\big(\frac{\lambda_e}{\xi\lambda_l \theta_e'^\delta C(\delta)}\big)-1}{\exp\big(\frac{\lambda_e}{\xi\lambda_l \theta_e'^\delta C(\delta)}\big)+\exp\big(-
{\xi}\lambda_l \pi r_0^2\theta_t'^\delta C(\delta)\big)-1}\bigg)^2\nonumber\\
&\simeq& \bigg({1-\exp\bigg(-\frac{\lambda_e}{\xi\lambda_l \theta_e'^\delta C(\delta)}\bigg)}\bigg)^2.
\end{eqnarray}

\section{Numerical Evaluation}
\label{sec:numerical}
In this section, we evaluate the tradeoff between the mean delay and the secrecy outage probability through numerical analysis to gain insight.
Without special explanation, the numerical parameters are set as follows in default. The intensities of the legitimate transmitters and the eavesdroppers are set as $\lambda_l=0.05$ and $\lambda_e=0.01$, respectively, and the desired link distance is $r_0=1$. The path loss exponent is $\alpha=4$, the transmit probability is $p=0.8$, and the packet arrival rate is $\xi=0.1$.
The rate threshold of codewords is $R_t=3$ bits/Hz, and that of confidential messages is $R_e=1$ bit/Hz.

\begin{figure}
\centering
\includegraphics[width=0.75\textwidth]{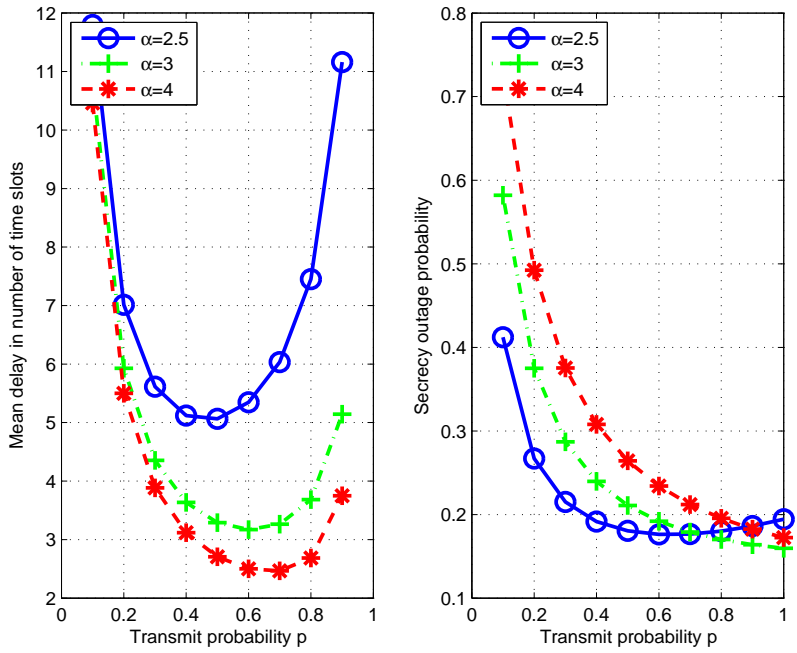}
\caption{The mean delay and the secrecy outage probability as functions of the transmit probability $p$ in the backlogged scenario.}
\label{fig:Bl_p}
\end{figure}

\begin{figure}
\centering
\includegraphics[width=0.75\textwidth]{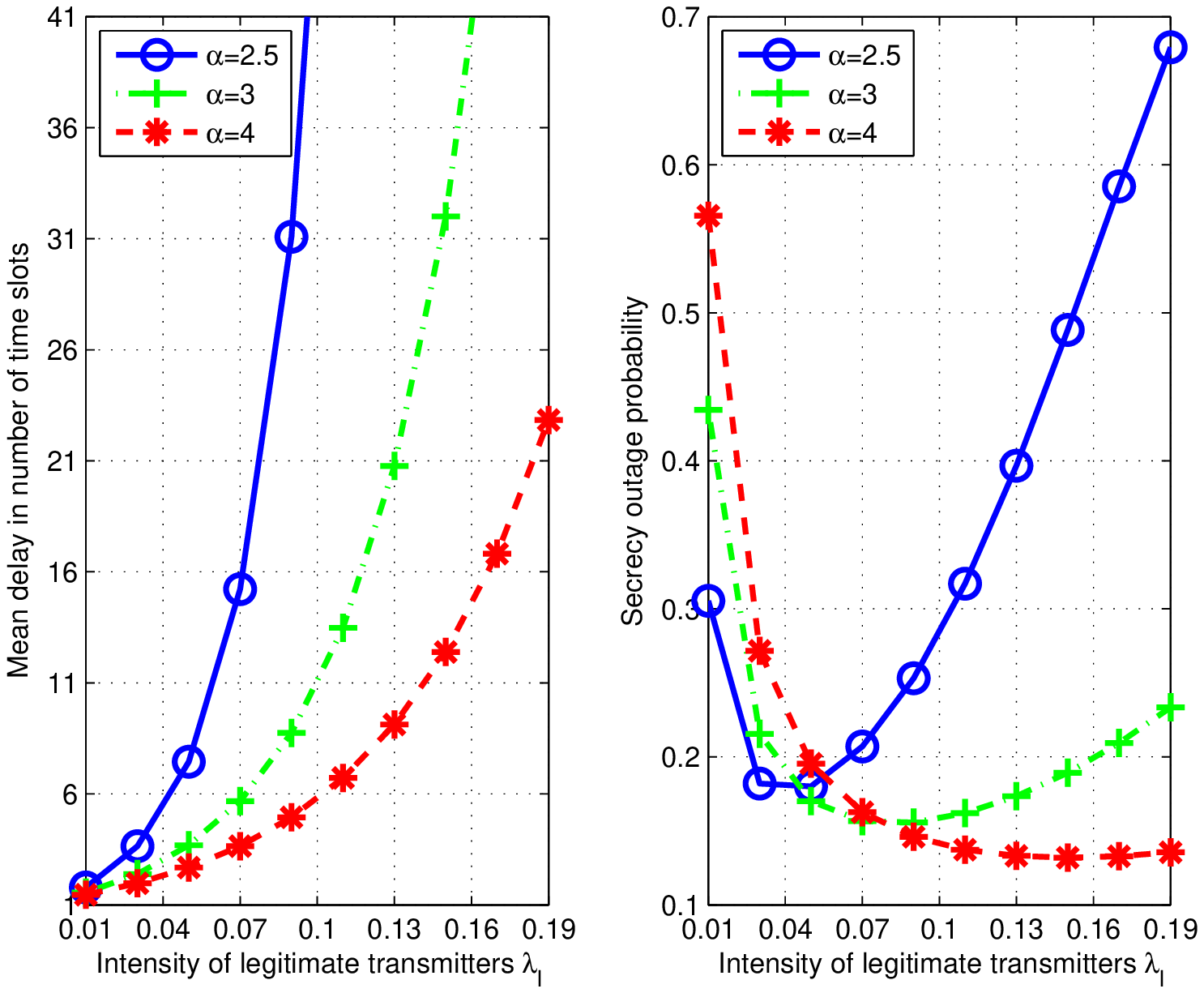}
\caption{The mean delay and the secrecy outage probability as functions of the intensity of legitimate transmitters $\lambda_l$ in the backlogged scenario.}
\label{fig:Bl_lambdal}
\end{figure}

\begin{figure}
\centering
\includegraphics[width=0.75\textwidth]{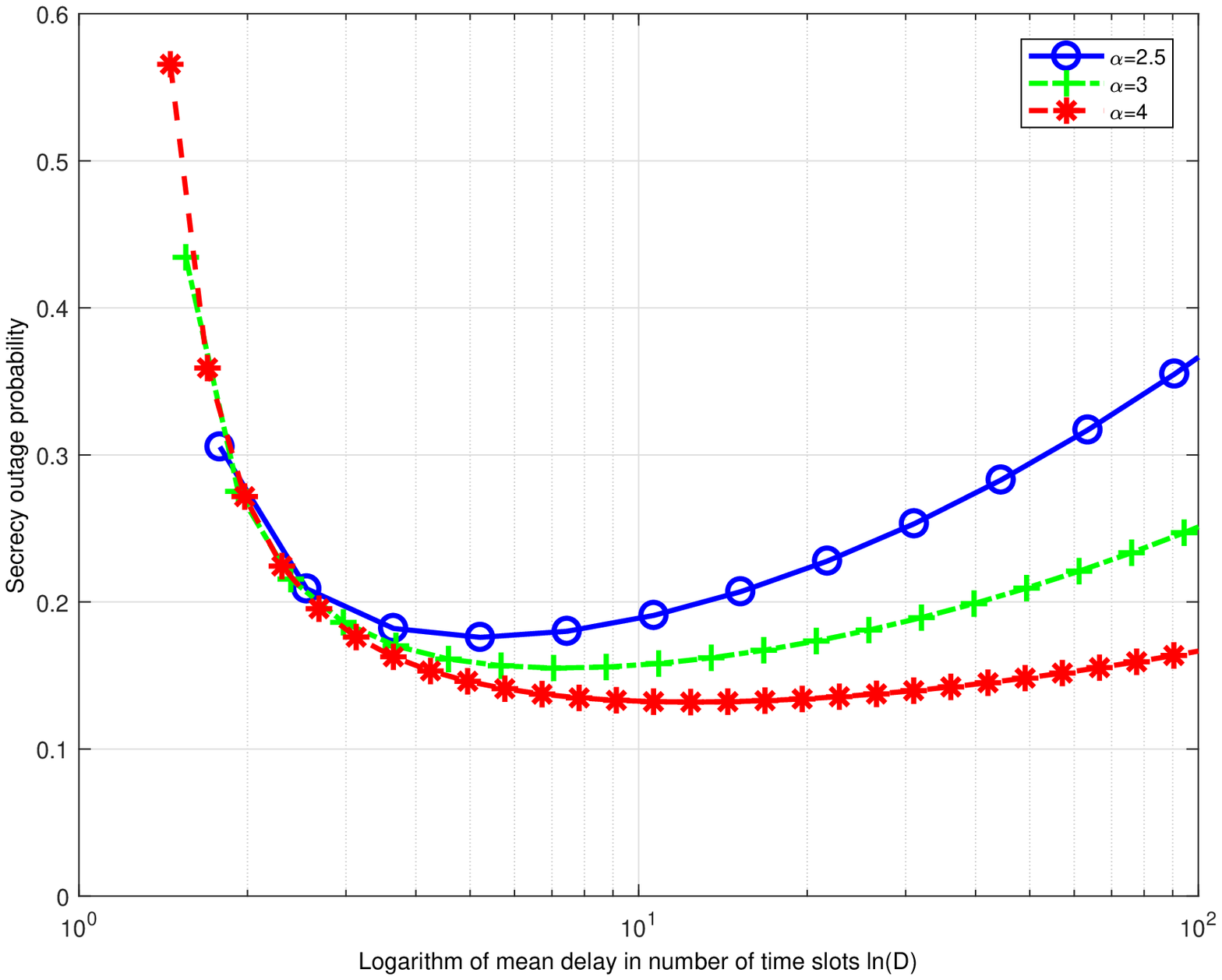}
\caption{Tradeoff between the mean delay and the secrecy outage probability when $\lambda_l$ increases from $0.01$ to $0.09$ in the backlogged scenario.}
\label{fig:Bl_tradeoff}
\end{figure}

Figure \ref{fig:Bl_p} shows the mean delay and the secrecy outage probability as functions of the transmit probability $p$ in the backlogged scenario. As $p$ increases, both the mean delay and the secrecy outage probability first decrease then increase. When $p$ starts to increase, the mean delay decreases since the successful delivery of a packet is limited by the transmit probability, while the secrecy outage probability decreases since the interference for the eavesdropping links increases. When $p$ continues to grow, the mean delay increases since the interference increases, while the secrecy outage probability increases since the probability to eavesdrop the message is increased with large $p$.
Figure \ref{fig:Bl_lambdal} shows the mean delay and the secrecy outage probability as functions of $\lambda_l$ in the backlogged scenario. When $\lambda_l$ increases, the mean delay increases and the secrecy outage probability decreases since the interference in the network increases. However, when $\lambda_l$ continues to grow, the secrecy outage probability increases since the distances of the eavesdropping links are decreased.
Figure \ref{fig:Bl_tradeoff} plots the tradeoff between the mean delay and the secrecy outage probability when increasing $\lambda_l$ in the backlogged scenario.
Though the value of $\lambda_l$ is not marked in Fig. \ref{fig:Bl_tradeoff}, we can observe from Theorem \ref{thm:1} that $\lambda_l$ is linear positive correlated to ln$(D)$ which is the value of the horizontal ordinate.
Note that when the path loss exponent $\alpha$ increases, the secrecy outage probability decreases for large $\lambda_l$ and increases for small $\lambda_l$, indicating that the security performance is much better when the signal attenuates quickly (i.e., when $\alpha$ is large) for large $\lambda_l$, and it is reverse for small $\lambda_l$.


\begin{figure}
\centering
\includegraphics[width=0.75\textwidth]{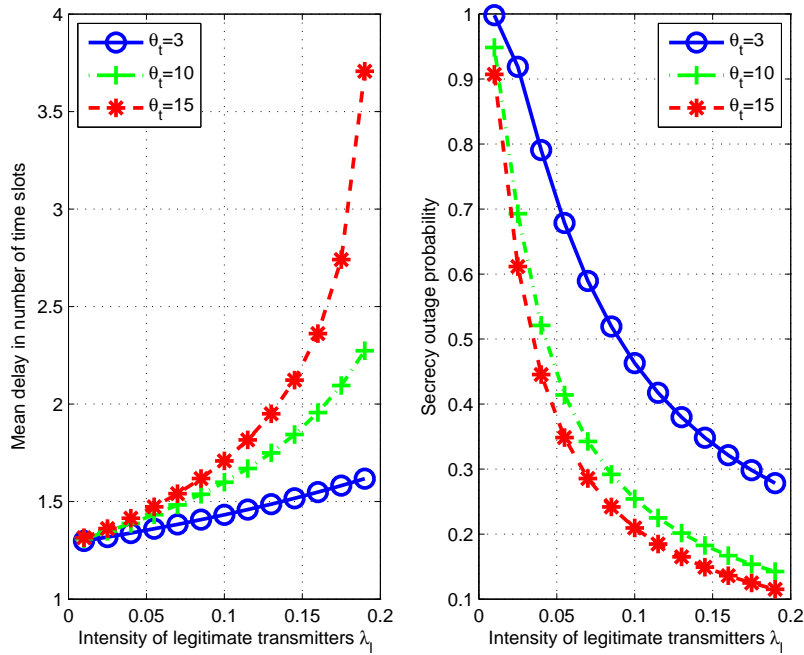}
\caption{The mean delay and the secrecy outage probability as functions of the intensity of legitimate transmitters $\lambda_l$ in the dynamic scenario.}
\label{fig:Dynamic_theta_t}
\end{figure}

Figure \ref{fig:Dynamic_theta_t} shows the mean delay and the secrecy outage probability as functions of the intensity of legitimate transmitters $\lambda_l$ with different $\theta_t$ in the dynamic scenario. When varying $\theta_t$, we keep the rate threshold of confidential messages $R_s$ to be the same (i.e., $R_s=1$ bits/Hz). Thus, the minimal confidential rate $R_s=1$ is guaranteed. Figure \ref{fig:Dynamic_theta_t} reveals that although the same confidential rate is guaranteed, the mean delay is larger and the secrecy outage probability is smaller in the high SIR regime than those in the low SIR regime.

\begin{figure}
\centering
\includegraphics[width=0.75\textwidth]{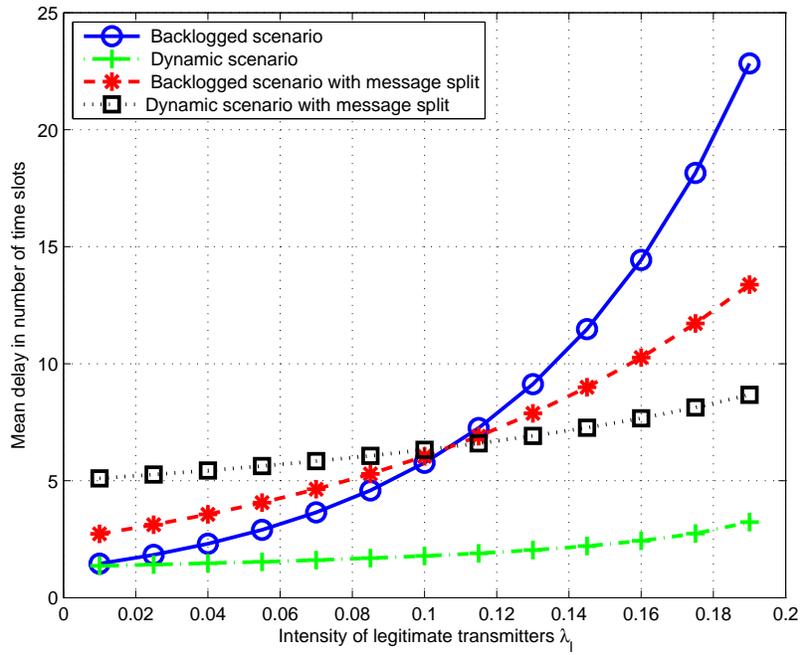}
\caption{Comparison of the mean delay in different scenarios with and without message split.}
\label{fig:Compare_split_delay}
\end{figure}

\begin{figure}
\centering
\includegraphics[width=0.75\textwidth]{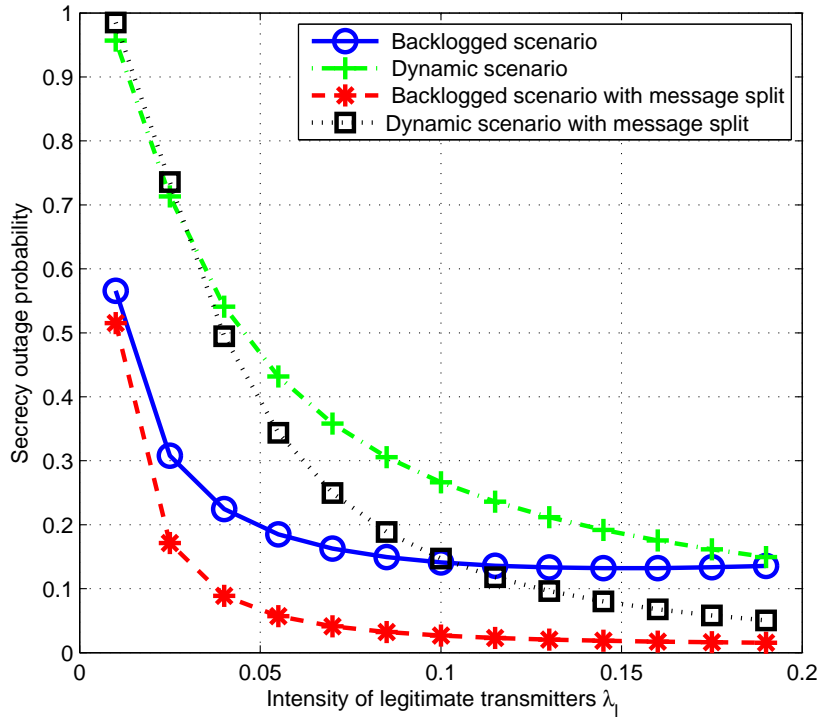}
\caption{Comparison of the secrecy outage probability in different scenarios with and without message split.}
\label{fig:Compare_split_so}
\end{figure}

Figure \ref{fig:Compare_split_delay} and Figure \ref{fig:Compare_split_so} show the comparison of the mean delay and the secrecy outage probability for different scenarios with and without message split. The arrival rate is $\xi=0.2$. Figure \ref{fig:Compare_split_delay} reveals that in the backlogged scenario, the mean delay with message split is larger than that without message split when $\lambda_l$ is small, while it is reverse for large $\lambda_l$, indicating that in the backlogged scenario, message split is beneficial for the delay performance only for large $\lambda_l$. Figure \ref{fig:Compare_split_delay} also reveals that in the dynamic scenario, the delay performance without message split is always better than that with message split.
Figure \ref{fig:Compare_split_so} shows that in the backlogged scenario, introducing message split greatly decreases the secrecy outage probability and improves the security performance, while in the dynamic scenario, introducing message split is only beneficial when $\lambda_l$ is large. Accordingly, in the actual application scenario such as the military communications, the message split scheme could be introduced to improve the security performance at the expense of increasing the delay when the density of the soldiers in the communications is not very small.

\section{Conclusions}
\label{sec:conclusions}
In this work, we evaluate the tradeoff between the delay and the security performance in large wireless networks for both the backlogged scenario and the dynamic scenario.
By combining the tools from stochastic geometry and queueing theory, we derive the close-formed results for the mean delay and the secrecy outage probability. We also analyzed the effect of a simple transmission mechanism which splits a message into two packets.

Our results reveal that the security performance of the wireless network is better when the path loss exponent is larger. We also observe that under the condition that a certain confidential rate is guaranteed, the delay performance is better in high SIR regime than that in low SIR regime.
Moreover, by introducing the simple mechanism of message split, it is shown that the delay performance in the backlogged scenario is improved when the density of the legitimate transmitters is large.
Meanwhile, introducing message split greatly improves the
security performance in the backlogged scenario, while slightly improves the security performance in the dynamic scenario when the density of the legitimate transmitters is large.

\begin{appendices}
\section{Proof of Theorem \ref{thm:2}}
\label{appendix:a}
\end{appendices}
Let $\Phi_k\subset\Phi_l$ be the set of transmitters that are active in time slot $k$, and let $I_k$ be the interference at the typical legitimate receiver located at $x_0$ in time slot $k$. Then, the mathematical expression for $I_k$ is
\begin{equation}
I_k=\sum_{y\in\Phi_l}h_y|y-x_0|^{-\alpha}\mathbf{1}(y\in\Phi_k).
\end{equation}
With given $\Phi_l$, the connection failure probability is
\begin{eqnarray}
\mathcal{P}_{\rm cf}^{\Phi_l}&{=}&\mathbb{P}\big(h_{x_0}r_0^{-\alpha}<\theta_t I_k\mid\Phi_l\big) \nonumber\\
&\stackrel{(a)}{=}&1-\mathbb{E}\big[\exp\big(-\theta_t r_0^\alpha I_k\big)\mid\Phi_l\big] \nonumber\\
&=&1-\mathbb{E}\bigg[\exp\bigg(-\sum_{y\in\Phi_l}\theta_t r_0^\alpha h_y|y-x_0|^{-\alpha}\mathbf{1}(y\in\Phi_k)\bigg)\mid\Phi_l\bigg] \nonumber\\
&=&1-\prod_{y\in\Phi_l}\Big(p\mathbb{E}\Big[\exp\big(-\theta_t r_0^\alpha h_y|y-x_0|^{-\alpha}\big)\mid\Phi_l\Big]+1-p\Big) \nonumber\\
&\stackrel{(b)}{=}&1-\prod_{y\in\Phi_l}\Big(\frac{p}{1+\theta_t r_0^\alpha|y-x_0|^{-\alpha}}+1-p\Big)\nonumber\\
&\stackrel{(c)}{=}&1-\prod_{y\in\Phi_l}\Big(\frac{p}{1+\theta_t r_0^\alpha|y|^{-\alpha}}+1-p\Big),
\label{equ:succ_aloha}
\end{eqnarray}
where $(a)$ and $(b)$ follow because the fading coefficients $\{h_y\}$ are i.i.d. exponential distributed random variables with unit mean, $(c)$ holds due to the stationarity of the PPP.

With given $\Phi_l$ and $\Phi_e$, the secrecy failure probability is
\begin{eqnarray}
\mathcal{P}_{\rm sf}^{\Phi_l,\Phi_e}&=& 1-\prod_{x_e\in\Phi_e}\mathbb{P}\{\mathrm{SIR}_{x_e}<\theta_e\mid \Phi_l,\Phi_e\} \nonumber\\
&=& 1-\prod_{x_e\in\Phi_e}\mathbb{P}\big\{h_e|x_e|^{-\alpha}<\theta_eI_k\mid \Phi_l,\Phi_e\big\} \nonumber\\
&=& 1-\prod_{x_e\in\Phi_e}\Big(1-\mathbb{E}\big[\exp(-\theta_eI_k|x_e|^{\alpha})\mid \Phi_l,\Phi_e\big]\Big) \nonumber\\
&=& 1-\prod_{x_e\in\Phi_e}\Big(1-\mathbb{E}\big[\exp(-\theta_e|x_e|^{\alpha}\sum_{y\in\Phi_l}h_y|y-x_e|^{-\alpha}\mathbf{1}(y\in\Phi_k))\mid \Phi_l,\Phi_e\big]\Big) \nonumber\\
&=& 1-\prod_{x_e\in\Phi_e}\Big(1-\prod_{y\in\Phi_l}\mathbb{E}\big[\exp(-\theta_e|x_e|^{\alpha}h_y|y-x_e|^{-\alpha}\mathbf{1}(y\in\Phi_k))\mid \Phi_l,\Phi_e\big]\Big) \nonumber\\
&=& 1-\prod_{x_e\in\Phi_e}\Big(1-\prod_{y\in\Phi_l}\Big(\frac{p}{1+\theta_e|x_e|^{\alpha}|y-x_e|^{-\alpha}}+1-p\Big)\Big).
\end{eqnarray}
From equation (\ref{eqn:sop}), we get the secrecy outage probability of the typical link as
\begin{eqnarray}
\mathcal{P}_{\rm so} &=& \mathbb{E}_{\Phi_l,\Phi_e}\bigg[\frac{\mathcal{P}_{\rm sf}^{\Phi_l,\Phi_e}}{1-\mathcal{P}_{\rm cf}^{\Phi_l}\big(1-\mathcal{P}_{\rm sf}^{\Phi_l,\Phi_e}\big)} \bigg]\nonumber \\
&=& \mathbb{E}_{\Phi_l,\Phi_e}\bigg[\bigg({1-\prod_{x_e\in\Phi_e}\Big(1-\prod_{y\in\Phi_l}\big(\frac{p}{1+\theta_e|x_e|^{\alpha}|y-x_e|^{-\alpha}}+1-p\big)\Big)}\bigg)\nonumber\\
&&\times\bigg(1-\bigg(1-\prod_{y\in\Phi_l}\big(\frac{p}{1+\theta_t r_0^\alpha|y|^{-\alpha}}+1-p\big)\bigg)\nonumber\\
&&\times\prod_{x_e\in\Phi_e}\Big(1-\prod_{y\in\Phi_l}\big(\frac{p}{1+\theta_e|x_e|^{\alpha}|y-x_e|^{-\alpha}}+1-p\big)\Big)\bigg)^{-1} \bigg]. \label{eqn:sop2}
\end{eqnarray}
To derive the exact secrecy outage probability is difficult, and we propose to approximate the secrecy outage probability.
Using the Jensen's inequality to approximate the secrecy outage probability, i.e., $\mathbb{E}_X(f(X))\simeq f_X(\mathbb{E}[X])$ for a random variable $X$ and a function $f(\cdot)$, we get
\begin{multline}
\mathcal{P}_{\rm so} \simeq \bigg({1-\mathbb{E}_{\Phi_l,\Phi_e}\bigg[\prod_{x_e\in\Phi_e}\Big(1-\prod_{y\in\Phi_l}\big(\frac{p}{1+\theta_e|x_e|^{\alpha}|y-x_e|^{-\alpha}}+1-p\big)\Big)\bigg]}\bigg)\\
\times\bigg(1-\bigg(1-\mathbb{E}_{\Phi_l}\bigg[\prod_{y\in\Phi_l}\big(\frac{p}{1+\theta_t r_0^\alpha|y|^{-\alpha}}+1-p\big)\bigg]\bigg)\\
\times\mathbb{E}_{\Phi_l,\Phi_e}\bigg[\prod_{x_e\in\Phi_e}\Big(1-\prod_{y\in\Phi_l}\big(\frac{p}{1+\theta_e|x_e|^{\alpha}|y-x_e|^{-\alpha}}+1-p\big)\Big)\bigg]\bigg)^{-1}.
\label{eqn:sop3}
\end{multline}

Using the probability generating functional (PGFL) of the PPP, we have
\begin{eqnarray}
\mathbb{E}_{\Phi_l}\bigg[\prod_{y\in\Phi_l}\big(\frac{p}{1+\theta_t r_0^\alpha|y|^{-\alpha}}+1-p\big)\bigg]&=&\exp\bigg(-\lambda_l\int_{\mathbb{R}^2}\bigg(1-\bigg({\frac{p}{1+\theta_t r_0^\alpha|y|^{-\alpha}}+1-p}\bigg)\bigg)\mathrm{d}y\bigg)\nonumber\\
&=&\exp\big(-p\lambda_l\pi r_0^2 \theta_t^\delta C(\delta) \big).
\end{eqnarray}
Due to the stationarity of the PPP and the independence between $\Phi_l$ and $\Phi_e$, we have
\begin{eqnarray}
&&\mathbb{E}_{\Phi_l,\Phi_e}\bigg[\prod_{x_e\in\Phi_e}\Big(1-\prod_{y\in\Phi_l}\big(\frac{p}{1+\theta_e|x_e|^{\alpha}|y-x_e|^{-\alpha}}+1-p\big)\Big)\bigg]\nonumber\\
&=&\mathbb{E}_{\Phi_e}\bigg[\prod_{x_e\in\Phi_e}\bigg(1-\mathbb{E}_{\Phi_l}\bigg[\prod_{y\in\Phi_l}\big(\frac{p}{1+\theta_e|x_e|^{\alpha}|y|^{-\alpha}}+1-p\big)\bigg]\bigg)\bigg]\nonumber\\
&=&\mathbb{E}_{\Phi_e}\bigg[\prod_{x_e\in\Phi_e}\Big(1-\exp\big(-p\lambda_l\pi |x_e|^2 \theta_e^\delta C(\delta) \big)\Big)\bigg]\nonumber\\
&=&\exp\Big(-\lambda_e\int_{\mathbb{R}^2}\exp\big(-p\lambda_l\pi |x|^2 \theta_e^\delta C(\delta) \big)\mathrm{d}x\Big)\nonumber\\
&=&\exp\Big(-\frac{\lambda_e}{p\lambda_l \theta_e^\delta C(\delta)}\Big).
\end{eqnarray}

Plugging into the equation (\ref{eqn:sop3}), we have
\begin{eqnarray}
\mathcal{P}_{\rm so} &\simeq& \frac{1-\exp\big(-\frac{\lambda_e}{p\lambda_l \theta_e^\delta C(\delta)}\big)}{1- \exp\big(
-\frac{\lambda_e}{p\lambda_l \theta_e^\delta C(\delta)}\big) +\exp\big(-
p\lambda_l \pi r_0^2\theta_t^\delta C(\delta)
-\frac{\lambda_e}{p\lambda_l \theta_e^\delta C(\delta)}\big)}\nonumber\\
&=& \frac{\exp\big(\frac{\lambda_e}{p\lambda_l \theta_e^\delta C(\delta)}\big)-1}{\exp\big(\frac{\lambda_e}{p\lambda_l \theta_e^\delta C(\delta)}\big)+\exp\big(-
p\lambda_l \pi r_0^2\theta_t^\delta C(\delta)\big)-1}.\label{eqn:sop4}
\end{eqnarray}

Therefore, we get the results in the theorem.

\bibliographystyle{IEEEtran}
\bibliography{123}

\end{document}